\newtheorem{lemma}{Lemma}
\newtheorem{remark}{Remark}
\def\bb{{\bf b}}
\def\be{{\bf e}}
\def\bff{{\bf f}}
\def\bh{{\bf h}}
\def\bn{{\bf n}}
\def\bq{{\bf q}}
\def\bs{{\bf s}}
\def\bv{{\bf v}}
\def\bw{{\bf w}}
\def\bx{{\bf x}}
\def\by{{\bf y}}
\def\bA{{\bf A}}
\def\bB{{\bf B}}
\def\bC{{\bf C}}
\def\bD{{\bf D}}
\def\bE{{\bf E}}
\def\bG{{\bf G}}
\def\bH{{\bf H}}
\def\bI{{\bf I}}
\def\bR{{\bf R}}
\def\bV{{\bf V}}
\def\bW{{\bf W}}
\def\cC{\mbox{$\mathcal{C}$}}
\def\cN{\mbox{$\mathcal{N}$}}
\def\cO{\mbox{$\mathcal{O}$}}
\def\cQ{\mbox{$\mathcal{Q}$}}
\def\bbC{\mbox{$\mathbb{C}$}}
\def\bbE{\mbox{$\mathbb{E}$}}
\begin{document}

\title{\huge Energy Efficiency Maximization Precoding for Quantized Massive MIMO Systems}


\author{Jinseok~Choi, Jeonghun~Park, and Namyoon~Lee

\thanks{J. Choi is with the Department of Electrical Engineering, Ulsan National Institute of Science and Technology (UNIST), Ulsan, 44919, South Korea (e-mail: {\texttt{jinseokchoi@unist.ac.kr}}). 

J. Park is with the School of Electronics Engineering, College of IT Engineering, Kyungpook National University, Daegu, 41566, South Korea (e-mail: {\texttt{jeonghun.park@knu.ac.kr}}).

N. Lee is with the Department of Electrical Engineering, POSTECH, Pohang, Gyeongbuk 37673, South Korea  (e-mail: {\texttt{nylee@postech.ac.kr}}).
}

\thanks{This work was supported in part by the National Research Foundation of Korea (NRF) grants funded by the Korea government (MSIT) (No. 2019R1G1A1094703) and (No. 2021R1C1C1004438), and in part by the MSIT (Ministry of Science and ICT), Korea, under the ITRC (Information Technology Research Center) support program (IITP-2021-2017-0-01635) supervised by the IITP (Institute for Information \& Communications Technology Planning \& Evaluation).}
}

\maketitle \setcounter{page}{1}

\begin{abstract} 
The use of low-resolution digital-to-analog and analog-to-digital converters (DACs and ADCs) significantly benefits energy efficiency (EE) at the cost of high quantization noise in implementing massive multiple-input multiple-output (MIMO) systems. 
For maximizing EE in quantized downlink massive MIMO systems, this paper formulates a precoding optimization problem {\color{black} with antenna selection}; 
yet acquiring the optimal joint precoding {\color{black} and antenna selection} solution is challenging due to the intricate EE characterization. 
To resolve this challenge, we decompose the problem into precoding direction and power optimization problems. 
For precoding direction, we characterize the first-order optimality condition, which entails the effects of quantization distortion {\color{black} and antenna selection}. 
For precoding power, we obtain the optimum solution using a gradient descent algorithm to maximize EE for given precoding direction. 
We cast the derived condition as a functional eigenvalue problem, wherein finding the principal eigenvector attains the best local optimal point. To this end, we propose generalized power iteration based algorithm. 
Alternating these two methods, our algorithm identifies a joint solution of the active antenna set and the precoding direction and power. 
In simulations, the proposed methods provide considerable performance gains. Our results suggest that a few-bit DACs are sufficient for achieving high EE in massive MIMO systems.

\end{abstract}

\section{Introduction}

Massive multiple-input multiple-output (MIMO)  \cite{marzetta:twc:10} is a key enabler for future cellular systems because it can provide substantial gains in both spectral efficiency (SE) and coverage by employing large-scale antenna arrays at a base station (BS). In principle, massive MIMO increases the SE gain by scaling antennas at the BS under ideal conditions. Unfortunately, in practice, the use of very large-antenna elements makes the BS hardware and radio frequency (RF) circuit architectures intricate and also gives rise to excessive energy consumption in the BS. The major sources of excessive energy consumption in massive MIMO implementation are high-resolution quantizers, including digital-to-analog converters (DACs) and analog-to-digital converters (ADCs) \cite{zhang:commmag:18}. As a result, implementing massive MIMO with low-resolution DACs and ADCs are rapidly gaining momentum because of its potential for energy-efficient massive MIMO system design \cite{choi:commmag:20, jacobsson:tcom:17, jacobsson:twc:19}.

The use of the low-resolution quantizers in transmission and reception causes severe quantization error. For example, in the downlink MIMO transmission using low-resolution DACs, the transmitted signals are distorted by the low-resolution DACs, and the quantization error incurs a significant amount of inter-user interference. This fundamentally limits the SE gains in massive MIMO. As a result, it is crucial to incorporate the quantization effects in designing a downlink transmission strategy to maximize communication performance such as SE and energy efficiency (EE) in the massive MIMO using low-resolution quantizers. In particular, design for energy efficient communications is critical in realizing massive MIMO systems. 
However, finding a precoding solution for maximizing the EE under low-resolution quantizers constraints is highly challenging. The challenge involves the non-convexity of the EE function, defined as the sum SE normalized by the total transmission power. Further, since the total transmission power is a function of an active antenna set, it entails a non-smooth function. This non-smooth part makes the optimization problem more difficult to solve.  
In this paper, we make a progress toward finding a sub-optimal solution that jointly identifies a set of active antenna elements and the corresponding linear precoding vectors to maximize EE in downlink quantized massive MIMO systems.
\subsection{Prior Work}


In the literature, linear precoding methods for maximizing EE have been widely studied in high-resolution quantizers setups.
For instance, the EE of a massive MIMO uplink system was analyzed in \cite{ngo:tcom:13} by adopting traditional linear receive beamforming such as zero-forcing (ZF) and minimum mean-square-error (MMSE), and it was shown that each user's transmit power can be reduced as the number of BS antennas increase. Enhancing this result, \cite{bjornson:tit:14} characterized the capacity limit in massive MIMO systems by incorporating non-ideal hardware impacts. 
In \cite{xu:twc:13}, an alternating optimization algorithm that determines an active antenna set and transmit precoding in an iterative water-filling fashion was proposed to maximize the EE.
A comprehensive survey on energy-efficient design is found in \cite{prasad:wirelessmag:17}.  


Despite the abundant previous studies, the aforementioned prior work is not applicable when low-resolution quantizers are used. This is because low-resolution quantizers induce non-negligible non-linear quantization distortion, which may render the performance characterization totally different from a conventional high-resolution system. Motivated by this, there exist several prior works that performed performance analysis for low-resolution quantizers. In \cite{singh:tcom:09}, a single-input single-output (SISO) system with  $1$-bit ADCs was considered, and it was shown that binary phase-shift-keying (BPSK) is a capacity-achieving modulation in a real-valued non-fading SISO channel. 
Considering a fading SISO channel with a $1$-bit quantizer, \cite{mezghani:isit:08} showed that quadrature phase-shift-keying (QPSK) is optimal when the signal-to-noise ratio (SNR) is larger than a certain threshold, and on-off QPSK is optimal otherwise. 
In \cite{mo:tsp:15}, assuming that the BS has perfect channel state information (CSI), the capacity of the MIMO channel with $1$-bit quantizer was characterized. 
In \cite{jacobsson:twc:17}, assuming uplink MIMO with low-resolution ADCs, a closed-form approximation on the SE with Gaussian inputs was derived by using Bussgang theorem \cite{bussgang:tech:52}. 
In \cite{choi2019two}, the optimality conditions of an analog combining were derived for uplink massive MIMO systems with low-resolution ADCs, and a near-optimal analog combiner was further developed based on the conditions.

More relevant to this work, several prior works studied precoding methods in downlink systems with low-resolution DACs. 
A main obstacle in developing precoders is that non-linear quantization distortion is not tractable. 
To handle the non-linear quantization distortion in a tractable way, 
in \cite{mezghani:icecs:09}, the non-linearity of the quantization distortion was resolved by adopting Bussgang theorem \cite{bussgang:tech:52}, then by using the linearization, a closed-form expression of an MMSE precoder was derived. 
In addition, a linear precoder was also designed based on the MMSE incorporating the quantization distortion in \cite{mezghani:icecs:09}. 
Extending the result, \cite{usman:icassp:16} developed a gradient projection-based precoding algorithm for downlink massive MIMO systems with $1$-bit DACs. In \cite{usman:icc:17}, by applying the similar approach to \cite{usman:icassp:16}, the joint design of the precoder and the receiver was studied.
In \cite{jacobsson:tcom:17}, a linear precoder for $3$ to $4$-bit DACs and a non-linear precoder for $1$-bit DACs were proposed. Especially, in designing the linear precoder, it adopted a conventional linear precoder such as MMSE or ZF, and quantized the adopted linear precoder to use with low-resolution DACs.
A key finding of \cite{jacobsson:tcom:17} is that using $3$ to $4$-bit DACs offers comparable performance with high-resolution DACs, provided that proper design of precoding is used. This finding is supported by the SE and bit-error-rate (BER) analyses in \cite{jacobsson:twc:19}. 
A similar approach was also used in \cite{saxena:tsp:17}, where conventional ZF precoding is applied, thereafter $1$-bit quantization is conducted to the precoded signal. 
The approach used in \cite{jacobsson:tcom:17} was further improved by using alternating minimization in \cite{chen:tvt:18}. 
As shown in the prior work, Bussgang decomposition is useful to characterize a $1$-bit quantizer, while its tractability is highly limited if a quantizer is more than $2$ bits. 
For precoding design in more general bits quantizers, an additive quantization noise model (AQNM) was also used in several works \cite{ribeiro:jstsp:18, dai:access:19,choi2019base,vlachos:twc:21}, providing a linear approximation of a quantized signal.
Employing the AQNM, \cite{ribeiro:jstsp:18} considered hybrid precoding architecture with low-resolution DACss in a point-to-point MIMO channel and performed extensive performance evaluation by using a conventional precoder. 
In \cite{dai:access:19}, a full-duplex MIMO system with low-resolution quantizers was considered and the achievable rate was analyzed by using conventional maximum ratio transmission (MRT). Similar to \cite{ribeiro:jstsp:18}, the AQNM was also used in \cite{dai:access:19}. 
In \cite{choi2019base, vlachos:twc:21}, an algorithm that selects an active antenna set was proposed for a given precoder. 
Without linear modeling, the alternating direction method of multipliers (ADMM) was used to solve an inter-user interference minimization problem in \cite{wang:twc:18}. Based on this result, a general precoder was designed. 
A $1$-bit massive MIMO precoding method that exploits interference in a constructive way was developed in \cite{li:twc:20}.
In addition to precoding design, QAM constellation range design was jointly performed for massive MIMO systems with $1$-bit DAC systems \cite{sohrabi:jstsp:18}.

As explained above, abounding studies provided crucial insights on low-resolution quantizer systems. 
The existing precoding methods, however, are mostly limited to variants of traditional linear precoding methods such as ZF or MMSE. 
Specifically, the traditional linear precoders are firstly adopted, then quantization distortion effects are reflected. Nevertheless, as we will show in this paper, this approach is far from the optimum due to the inherent limitations on the traditional precoders. 
Accordingly, it is necessary to develop a novel precoding method for massive MIMO with low-resolution quantizers. 

\subsection{Contributions}

We investigate a EE maximization problem with regard to precoders in a downlink multiuser massive MIMO system, where low-resolution DACs and low-resolution ADCs are employed at the BS and users, respectively. Our main contributions are summarized as follows:

\begin{itemize}
    \item     We first put forth a precoding strategy for maximizing EE in quantized massive MIMO systems by reformulating the problem. To accomplish the precoding optimization, by adopting the AQNM \cite{fletcher2007robust}, which is a linear approximation technique of the non-linear quantizer function with additional quantization noise, we define the EE of the quantized massive MIMO system. 
    The EE function is the sum SE function normalized by the total transmit power, a fractional programming form. Therefore, it is critical to find an optimal transmit power level while maximizing the SE. 
    Unfortunately, the EE maximization problem in the downlink quantized massive MIMO system is NP-hard, similar to the case using infinite-resolution DACs and ADCs. Therefore, finding a global optimum solution is infeasible using polynomial-time complexity algorithms.  
     To overcome this challenge, we take the Dinkelbach method~\cite{dinkelbach1967nonlinear} to relax a fractional programming, then we decompose the optimization variables, i.e., the precoding vectors, into two parts: 1) scaling and 2) directional components. 
     
     \item  Leveraging this decomposition, we propose an alternating optimization framework for the EE maximization called Q-GPI-EEM.
      To be specific, for a fixed Dinkelbach variable and a directional component, we obtain the optimum power scaling component by using a gradient descent method. Since this sub-optimization problem is convex, gradient descent is sufficient to find the optimum point.
       Subsequently, with the obtained power scaling component, we find the directional component; we derive the first-order optimality condition for the non-convex precoding direction optimization problem. The derived condition is cast as a functional eigenvalue problem, and it insinuates that a local optimal point that has zero gradient is obtained by finding the principal eigenvector of the functional eigenvalue problem. To this end, modifying the algorithm in \cite{choi:twc:20}, we present an precoding algorithm called quantized generalized power iteration-based algorithm for direction optimization (Q-GPI-DO) that iteratively identifies the principal eigenvector with a few number of iterations. 
       The algorithm iterates by updating the Dinkelbach variable until it converges. Unlike other existing algorithms for the EE maximization in the quantized massive MIMO systems, the most prominent feature of Q-GPI-EEM is to jointly identify a set of active antennas and the corresponding precoding direction by considering the effect of RF circuit power consumption for the active transmit antennas. 
    
    \item  As a byproduct, we also present a precoding method for maximizing the SE in quantized massive MIMO systems, called Q-GPI-SEM by showing that the SE maximization is a special case of the EE maximization.
    Q-GPI-SEM ensures to find a local optimal point of the SE maximization for any important system parameters, including the number of antennas, the number of downlink users, and the number of DAC and ADS resolution bits. Besides, our algorithm generalizes and the prior GPI-based algorithm \cite{choi:twc:20} by incorporating the quantization error effects caused by DACs and ADCs.

    \item Simulation results demonstrate that the proposed algorithms, Q-GPI-EEM and Q-GPI-SEM considerably outperform conventional algorithms in both EE and SE for all system configurations. 
    In terms of EE, our EE maximization algorithm provides robustness to the maximum available transmit power constraint. For example, the EE performances of other methods degrade as the transmit power increases since the SE does not efficiently improve as the transmit power increases in a high signal-to-noise-ratio (SNR) regime. For this reason, using conventional methods, e.g., \cite{vlachos:twc:21}, the EE converges to zero eventually if the transmit power continues to increase. On the contrary, our algorithm offers the monotonically increasing EE by adjusting the actual transmit power in the algorithm. 
    Further, in the EE maximization, antenna selection shows a trend that high-resolution DAC equipped antennas are less likely to be selected for saving the power consumption. 
     In terms of SE, the SE is saturated as the transmit power increases because of the quantization distortion effects of low-resolution quantizers. Using the proposed algorithm, we can pull up this saturation level more than $2\times$  compared to the conventional methods.

    \item In addition to the SE and EE improvement, we also elucidate a system design guideline for quantized massive MIMO systems. Our key finding is that using large-scale antenna elements, each with low-resolution quantizers, is beneficial in both the SE and EE. 
    More detailed system design insights are provided as follows:
    $(i)$ Regarding the EE, there exists the optimal number of the DAC bits that maximize the EE. As the number of antennas increases, the optimal DAC bits decreases, and the EE increases. For instance, using $4$-bit DACs achieves the maximum EE if the BS has $8$ antennas while using $3$-bit DACs achieves the maximum EE if the BS has $32$ antennas. 
     $(ii)$ Regarding the SE, our method is highly efficient if the number of antennas is enough. In particular, the proposed method achieves a similar level of SE of the high-resolution DACs, e.g., $9\sim11$ bits even with low-resolution DACs, e.g., $3\sim5$ bits. This is because the harmful effects from low-resolution quantizers can be compensated by using large spatial degrees-of-freedom provided by massive MIMO.
    $(iii)$ Under the constraint of the total number of DAC bits, using the homogeneous DACs at the BS is beneficial in terms of the SE and the EE. In addition, the proposed algorithm shows higher robustness for DAC configuration, and thus, it can provide more flexibility in the system design. 
    Overall, the proposed algorithm provides high SE and EE performance in the massive MIMO regime, allowing the BS to employ low-resolution DACs and offering high system design flexibility.
\end{itemize}





{\it Notation}: $\bf{A}$ is a matrix and $\bf{a}$ is a column vector. 
Superscripts $(\cdot)^{*}$, $(\cdot)^{\sf T}$, $(\cdot)^{\sf H}$, and $(\cdot)^{-1}$ denote conjugate, transpose, Hermitian, and matrix inversion, respectively. ${\bf{I}}_N$ is an identity matrix of size $N \times N$ and ${\bf{0}}_{M\times N}$ is a zero matrix of size $M \times N$.
$\mathcal{CN}(\mu, \sigma^2)$ is a complex Gaussian distribution with mean $\mu$ and variance $\sigma^2$.  
${\rm Unif}[a, b]$ denotes a discrete uniform distribution from $a$ to $b$.
A diagonal matrix $\rm diag(\bf A)$ has the diagonal entries of $\bA$ at its diagonal entries.
Assuming that ${\bf{A}}_1, \dots, {\bf{A}}_N \in \mathbb{C}^{K \times K}$, ${\rm blkdiag}\left({\bf{A}}_1,\dots, {\bf{A}}_N \right)$ is a block diagonal matrix of size ${KN\times KN}$ whose $n$th block diagonal entry is $\bA_n$. $\|\bf A\|$ represents L2 norm, $\|{\bf A}\|_{F}$ represents Frobenius norm, $\mathbb{E}[\cdot]$ represents an expectation operator, ${\rm tr}(\cdot)$ denotes a trace operator, ${\rm vec}(\cdot)$ indicates a vectorization operator, and $\otimes$ is Kroncker product.


\section{System Model} \label{sec:sys_model}

We consider a downlink multiuser massive MIMO system where the base station (BS) is equipped with $N\gg 1$ antennas and each user is equipped with a single antenna, and there are $K$ users to be served. 
We further assume that the BS employs low-resolution DACs and the users employ low-resolution ADCs. Throughout this paper, we consider a general case for low-resolution quantizers where each DAC and ADC can have any bit configuration.
At the BS, a precoded digital baseband transmit signal vector $\bx \in \bbC^{K}$ is expressed as
\begin{align} \label{eq:digital_signal}
    \bx = \sqrt{P}\bW\bs,
\end{align}
where $\bs\sim \cC\cN({\bf 0}_{K\times 1}, \bI_K)$ is a symbol vector, $\bW \in \bbC^{N\times K}$ represents a precoding matrix, and $P$ is the maximum transmit power. We assume $\| {\bf{W}} \|_{F}^2 \le 1$, thereby the maximum transmit power is maintained as $P$. 

The digital baseband signal $\bx$ in \eqref{eq:digital_signal} is further quantized at the DACs prior to transmission. To characterize the quantized signal, we adopt the AQNM \cite{fletcher2007robust}, 
that approximates the quantization process in a linear form.
Then, the analog baseband transmit signal after quantization becomes
\begin{align}
    \cQ(\bx) \approx \bx_{\sf q} =\sqrt{P} {\pmb \Phi}_{\alpha_{\sf bs}}\bW\bs + \bq_{\sf bs},
\end{align}
where $\mathcal{Q}(\cdot)$ is an element-wise quantizer that applies for each real and imaginary part, ${\pmb \Phi}_{\alpha_{\sf bs}} = {\rm diag}(\alpha_{{\sf bs}, 1}, \dots, \alpha_{{\sf bs}, N}) \in \bbC^{N\times N}$ denotes a diagonal matrix of quantization loss, and $\bq_{\sf bs} \in \bbC^{N}$ is a quantization noise vector. The quantization loss of the $n$th DAC  $\alpha_{{\sf bs}, n} \in (0,1)$
is defined as $\alpha_{{\sf bs},n} = 1- \beta_{{\sf bs},n}$, where $\beta_{{\sf bs},n}$ is a normalized mean squared quantization error with $\beta_{{\sf bs},n} = \frac{\mathbb{E}[|{x} - \mathcal{Q}_n(x)|^2]}{\mathbb{E}[|{x}|^2]}$ 
\cite{fletcher2007robust, zhang2018mixed}.
The values of $\beta_{{\sf bs},n}$ are characterized depending on the number of quantization bits at the $n$th BS antenna $b_{{\sf DAC},n}$. Specifically, for $b_{{\sf DAC},n} \leq 5$, $\beta_{{\sf bs},n}$ is quantified in Table 1 in \cite{fan2015uplink}. For $b_{{\sf DAC},n} > 5$, $\beta_{{\sf bs},n}$ can be approximated as $\beta_{{\sf bs},n} \approx \frac{\pi\sqrt{3}}{2}2^{-2b_{{\sf DAC},n}}$ \cite{gersho2012vector}.
The quantization noise $\bq_{\sf bs}$ is uncorrelated with $\bx$ and follows $\bq_{\sf bs} \sim \mathcal{CN}({\bf 0}_{N\times 1}, \mathbf{R}_{\bq_{\sf bs}\bq_{\sf bs}})$. The covariance $\bR_{\bq_{\sf bs}\bq_{\sf bs}}$ is computed as \cite{fletcher2007robust,zhang2018mixed} 
\begin{align}
    \bR_{\bq_{\sf bs}\bq_{\sf bs}} &= {\pmb \Phi}_{\alpha_{\sf bs}}{\pmb \Phi}_{\beta_{\sf bs}} {\rm diag}\left(\bbE\big[\bx\bx^{\sf H}\big]\right) \\
    &= {\pmb \Phi}_{\alpha_{\sf bs}}{\pmb \Phi}_{\beta_{\sf bs}} {\rm diag}\left(P\bW\bW^{\sf H}\right).
\end{align}
The quantized signal $\bx_{\sf q}$ is amplified by a power amplifier under a power constraint at the BS. 
Since the maximum transmit power is defined as $P$, we have the following power constraint~\cite{choi:tcom:comp} 
\begin{align}
    {\rm tr}\left(\bbE\big[\bx_{\sf q}\bx_{\sf q}^{\sf H}\big]\right) \leq P.
\end{align}

Now, we represent the received analog baseband signals at all the $K$ users as
\begin{align}
    \label{eq:rx_analog}
    \by = \bH^{\sf H}\bx_{\sf q} + \bn,
\end{align}
where $\bH \in \bbC^{N\times K}$ is a channel matrix and $\bn \sim \cC\cN({\bf 0}_{K\times 1}, \sigma^2\bI_K)$ indicates an $K\times 1$ additive white Gaussian noise vector with zero mean and  variance of $\sigma^2$. 
Each column of the channel matrix $\bh_k$ represents the channel between user $k$ and the BS.
We assume a block fading model, where ${\bf{h}}_k$ is invariant within one transmission block and changes independently over  transmission blocks. 
We focus on a single transmission block and assume that the CSI is perfectly known at the BS. 

The received analog signals in \eqref{eq:rx_analog} are quantized at the ADCs of the users. 
Accordingly, the received digital baseband signals are given as \cite{fletcher2007robust, choi:tsp:18}
\begin{align}
    \cQ(\by) \approx \by_{\sf q} &= {\pmb \Phi}_{\alpha}\by + \bq \\
    & = {\pmb \Phi}_{\alpha}\bH^{\sf H}\bx_{\sf q} + {\pmb \Phi}_{\alpha}\bn + \bq \\
    & =  \sqrt{P}{\pmb \Phi}_{\alpha}\bH^{\sf H}{\pmb \Phi}_{\alpha_{\sf bs}}\bW\bs  + {\pmb \Phi}_{\alpha}\bH^{\sf H}\bq_{\sf bs}+ {\pmb \Phi}_{\alpha}\bn +  \bq,
\end{align}
where $\pmb \Phi_\alpha = {\rm diag}(\alpha_1,\dots,\alpha_K) \in \bbC^{K\times K}$ is a diagonal matrix of ADC quantization loss defined as $\alpha_k = 1-\beta_k$, and $\bq \in \bbC^{K}$ is a quantization noise vector, which is uncorrelated with $\by$.
Here, $\alpha_k$ and $\beta_k$ are similarly defined as $\alpha_{{\sf bs},n}$ and  $\beta_{{\sf bs},n}$, respectively. 
The quantization noise $\bq$ has zero mean and follows a complex Gaussian distribution.
Consequently, the digital baseband signal at user $k$ is given as
\begin{align}
    \label{eq:yqk}
    y_{{\sf q}, k} =\sqrt{P}\alpha_k  \bh^{\sf H}_k {\pmb \Phi}_{\alpha_{\sf bs}}\bw_k s_k +  \sqrt{P} \sum_{\ell \neq k}\alpha_k \bh^{\sf H}_k {\pmb \Phi}_{\alpha_{\sf bs}}\bw_\ell s_\ell + \alpha_k\bh_k^{\sf H}\bq_{\sf bs} + \alpha_k n_k + q_k,
\end{align}
where $\bw_k \in \bbC^{N}$  denotes the $k$th column vector of $\bW$, and  $s_k$, $n_k$, and $q_k$ represent $k$th element of $\bs$, $\bn$, and  $\bq$, respectively.
The variance of $q_k$ is computed as \cite{fletcher2007robust}
\begin{align}
    r_{{\sf q}_k{\sf q}_k}& =\alpha_k\beta_k\bbE\big[y_k y_k^{\sf H}\big] \\
    & = \alpha_k\beta_k\left(\bh_k^{\sf H}\bbE\big[\bx_{\sf  q}\bx_{\sf q}^{\sf H}\big]\bh_k^{\sf H}+ \sigma^2\right)\\
    \label{eq:rqq}
    &  = \alpha_k\beta_k\left(\bh_k^{\sf H}\big(P{\pmb \Phi}_{\alpha_{\sf bs}}\bW\bW^{\sf H}{\pmb \Phi}_{\alpha_{\sf bs}}^{\sf H} + \bR_{\bq_{\sf  bs}\bq_{\sf  bs}}\big)\bh_k + \sigma^2 \right).
\end{align}

In the following sections, we formulate an EE maximization problem, propose an algorithm, and validate the performance via simulations accordingly.


\section{Energy Efficiency Maximization Problem Formulation} 

\subsection{Performance Metric}

We characterize the EE for the considered system as our main performance metric.
Using \eqref{eq:yqk}, we define the downlink SINR of user $k$  as
\begin{align}
    \label{eq:sinr}
    \Gamma_k = \frac{P\alpha_k^2    |\bh^{\sf H}_k {\pmb \Phi}_{\alpha_{\sf bs}}\bw_k|^2}  { P\alpha_k^2 \sum_{\ell \neq k}  |\bh^{\sf H}_k {\pmb \Phi}_{\alpha_{\sf bs}}\bw_\ell|^2 + \alpha_k^2\bh_k^{\sf H}\bR_{\bq_{\sf bs}\bq_{\sf bs}}\bh_k + \alpha_k^2\sigma^2 + r_{{\sf q}_k{\sf q}_k}}.
\end{align}
Unlike a perfect quantization system where resolution of DACs and ADCs is infinite, the SINR in \eqref{eq:sinr} includes the quantization noise power and the quantization loss induced by both low-resolution DACs and ADCs.
Accordingly, the SE for user $k$ is expressed as
\begin{align}
    \label{eq:se}
    R_{k} = \log_2 \left(1 + \Gamma_k\right). 
\end{align}

Now, we define the EE based on \eqref{eq:se}.
Let $P_{\sf LP}$, $P_{\sf M}$, $P_{\sf LO}$, $P_{\sf H}$, $P_{\sf PA}$, $P_{\sf RF}$, $P_{\sf DAC}$, $P_{\sf cir}$, and $P_{\sf BS}$ denote the power consumption of low-pass filter, mixer,
local oscillator, $90^{\circ}$ hybrid with buffer, power amplifier (PA), radio frequency (RF) chain,  DAC, analog circuits, and BS, respectively.
The DAC power consumption  $P_{\sf DAC}$ (in Watt) is defined as 
\cite{ribeiro:jstsp:18, cui2005energy}
\begin{align}
    P_{\sf DAC}(b_{\sf DAC},f_s) = 1.5\times10^{-5}\cdot 2^{b_{\sf DAC}} + 9\times 10^{-12}\cdot f_s \cdot b_{\sf DAC}.
\end{align}
If antenna $n$  is active, then the corresponding DAC pair and  RF chain  consume the circuit power of $2\,P_{\sf DAC}(b_{{\sf DAC},n},f_s) + P_{\sf RF}$ where  $P_{\sf RF} = 2P_{\sf LP} + 2 P_{\sf M} + P_{H}$. On the contrary, if the antenna is inactive, no power is consumed in the corresponding DAC and RF chain. 
Considering such behaviour, the circuit power consumption $P_{\sf cir}$ is formulated as \cite{ribeiro:jstsp:18}
\begin{align} 
    \label{eq:P_cir}
    P_{\sf cir}= P_{\sf LO}  + \sum_{n = 1}^{N} \mathbbm{1}_{\{n \in \CMcal{A}\}} \big(2\,P_{\sf DAC}(b_{{\sf DAC},n},f_s) + P_{\sf RF}\big),
\end{align}
where $\mathbbm{1}_{\{a\}}$ is the indicator function which is $\mathbbm{1}_{\{a\}} = 1 $ only when $a$ is true and $\mathbbm{1}_{\{a\}}= 0$ otherwise, and $\CMcal{A}$ is an index set of active antennas.
Then the BS power consumption $P_{\sf BS}$ is given as \cite{ribeiro:jstsp:18} 
\begin{align}
    P_{\sf BS}(N, \bb_{\sf DAC}, f_s, \bW) 
    = P_{\sf cir} + \kappa^{-1}P_{\sf tx},
\end{align}
where $P_{\sf tx}$ is the actual transmit power, $\kappa$ is a PA efficiency, i.e., $\kappa = P_{\sf tx}/P_{\sf PA}$, and  $\bb_{\sf DAC} = [b_{{\sf DAC},1},\dots,b_{{\sf DAC},N}]^{\sf T}$.
Finally, the EE of the considered system is defined as
\begin{align}
    \label{eq:ee}
    \eta = \frac{{\Omega}\sum_{k=1}^K R_k}{P_{\sf BS}(N, \bb_{\sf DAC}, f_s, \bW)}, 
\end{align}
where $\Omega$ denotes transmission bandwidth.




\subsection{Formulated Problem}

We now formulate an EE maximization problem with respect to a precoder.  
We first normalize the EE $\eta$ in \eqref{eq:ee} by the bandwidth $\Omega$ since it is irrelevant to precoder design. Throughout the paper, we use the EE and normalized EE interchangeably as they do not change the problem.
Then the EE maximization problem is formulated as
\begin{align} 
    \label{eq:problem_ee}
    \mathop{{\text{maximize}}}_{\bW}& \;\; \frac{\sum_{k=1}^K R_k}{P_{\sf BS}(N, \bb_{\sf DAC}, f_s, \bW)} \\
    \label{eq:const_ee}
    {\text{subject to}} & \;\;  {\rm tr}\left(\bbE\big[\bx_{\sf q}\bx_{\sf q}^{\sf H}\big]\right) \leq P.
\end{align}
We remark that the BS power consumption includes the circuit power consumption, which is a function of active antennas, and thus, the EE maximization problem needs to be solved by designing $\bW$ with incorporating the impact of active and inactive antenna sets. 
In this regard, we propose an algorithm that jointly designs a precoder and performs an antenna selection in Section~\ref{sec:algorithm}.

\section{Joint Precoding and Antenna Selection Algorithm} \label{sec:algorithm}

A direct solution for the formulated problem in \eqref{eq:problem_ee} is not available since it is non-smooth and non-convex. 
To resolve these challenges, we first reformulate the problem and then propose an algorithm that provides the best sub-optimal solution.



\subsection{Problem Reformulation}

We  reformulate \eqref{eq:problem_ee}  by using the Dinkelbach method~\cite{dinkelbach1967nonlinear} as
\begin{align} 
    \label{eq:problem_ee_dinkelbach}
    \mathop{{\text{maximize}}}_{\bW, \mu}& \;\; \sum_{k=1}^K R_k - \mu {P_{\sf BS}(N, \bb_{\sf DAC}, f_s, \bW)} \\
    \label{eq:const_ee_dinkelbach}
    {\text{subject to}} & \;\;  {\rm tr}\left(\bbE\big[\bx_{\sf q}\bx_{\sf q}^{\sf H}\big]\right) \leq P\\
    & \;\;\mu > 0,
\end{align}
where $\mu$ is an auxiliary variable.
To cast the problem \eqref{eq:problem_ee_dinkelbach} into a tractable form, 
we first rewrite the DAC quantization noise covariance term coupled with a user channel in  \eqref{eq:sinr}  as
\begin{align}
    \bh_k^{\sf H}\bR_{\bq_{\sf bs}\bq_{\sf bs}}\bh_k &= \bh_k^{\sf H}{\pmb \Phi}_{\alpha_{\sf bs}}{\pmb \Phi}_{\beta_{\sf bs}}{\rm diag}\left(P\bW\bW^{\sf H}\right)\bh_k \\
    &= P\sum_{n=1}^N |h_{n,k}|^2\alpha_{{\sf bs},n}\beta_{{\sf bs},n}\sum_{\ell=1}^K|w_{n,\ell}|^2\\
    & =P\sum_{\ell =1}^K\sum_{n=1}^N w_{n,\ell}^*\alpha_{{\sf bs},n}\beta_{{\sf bs},n}h_{n,k} h_{n,k}^*w_{n,\ell} \\
    \label{eq:Rqq_reform}
    & = P\sum_{\ell=1}^K\bw_\ell^{\sf H}{\pmb \Phi}_{\alpha_{\sf bs}}{\pmb \Phi}_{\beta_{\sf bs}}{\rm diag}\left(\bh_k\bh_k^{\sf H}\right)\bw_\ell.
\end{align}
We also rewrite the ADC quantization noise variance $r_{{\sf q}_k{\sf q}_k}$ in \eqref{eq:rqq} as
\begin{align}
    r_{{\sf q}_k{\sf q}_k}  &\stackrel{(a)}=  \alpha_k\beta_k\left(P\bh_k^{\sf H}{\pmb \Phi}_{\alpha_{\sf bs}}\bW\bW^{\sf H}{\pmb \Phi}_{\alpha_{\sf bs}}^{\sf H}\bh_k  +  P\sum_{\ell=1}^K\bw_\ell^{\sf H}{{\pmb \Phi}}_{\alpha_{\sf bs}}{{\pmb \Phi}}_{\beta_{\sf bs}}{\rm diag}\left(\bh_k\bh_k^{\sf H}\right)\bw_\ell + \sigma^2\right)  \\
    & = \alpha_k\beta_k\left(P\sum_{\ell=1}^K\bw_\ell^{\sf H} {\pmb \Phi}_{\alpha_{\sf bs}}^{\sf H}\bh_k\bh_k^{\sf H}{\pmb \Phi}_{\alpha_{\sf bs}}\bw_\ell + P\sum_{\ell=1}^K\bw_\ell^{\sf H}{{\pmb \Phi}}_{\alpha_{\sf bs}}{{\pmb \Phi}}_{\beta_{\sf bs}}{\rm diag}\left(\bh_k\bh_k^{\sf H}\right)\bw_\ell+ \sigma^2\right)\\
    \label{eq:rqq_reform}
    & = P\alpha_k\beta_k\sum_{\ell=1}^K\bw_\ell^{\sf H}\left({\pmb \Phi}_{\alpha_{\sf bs}}^{\sf H}\bh_k\bh_k^{\sf H}{\pmb \Phi}_{\alpha_{\sf bs}} + {{\pmb \Phi}}_{\alpha_{\sf bs}}{{\pmb \Phi}}_{\beta_{\sf bs}}{\rm diag}\left(\bh_k\bh_k^{\sf H}\right)\right)\bw_\ell  + \alpha_k\beta_k\sigma^2,
\end{align}
where $(a)$ comes from \eqref{eq:Rqq_reform}.
Applying \eqref{eq:Rqq_reform}, \eqref{eq:rqq_reform}, and $\beta_k = 1-\alpha_k$ to the SINR in \eqref{eq:sinr}, we re-organize the SINR term to represent the SINR in a more compact form:
\begin{align}
    \label{eq:sinr_reform}
    \Gamma_k = \frac{\alpha_k|\bh^{\sf H}_k {\pmb \Phi}_{\alpha_{\sf bs}}\bw_k|^2}
    { \sum_{\ell = 1}^{K}  |\bh^{\sf H}_k{\pmb \Phi}_{\alpha_{\sf bs}} \bw_\ell|^2 - \alpha_k|\bh_k^{\sf H}{\pmb \Phi}_{\alpha_{\sf bs}}\bw_k|^2  +  \sum_{\ell = 1}^{K}\bw_\ell^{\sf H}{\pmb \Phi}_{\alpha_{\sf bs}}{\pmb \Phi}_{\beta_{\sf bs}}{\rm diag}\left(\bh_k\bh_k^{\sf H}\right)\bw_\ell+ \sigma^2/P }.
\end{align}
Now, we simplify the transmit power constraint in \eqref{eq:const_se} as
\begin{align}
    \label{eq:const_se_simple}
    {\rm tr}\left(\bbE\big[\bx_{\sf q}\bx_{\sf q}^{\sf H}\big]\right)= {\rm tr}\left(P{\pmb \Phi}_{\alpha_{\sf bs}}\bW\bW^{\sf H}{\pmb \Phi}_{\alpha_{\sf bs}}^{\sf H}+ P{\pmb \Phi}_{\alpha_{\sf bs}}{\pmb \Phi}_{\beta_{\sf bs}}{\rm diag}\left(\bW\bW^{\sf H}\right)\right) \stackrel{(a)}= {\rm tr}\left(P{\pmb \Phi}_{\alpha_{\sf bs}}\bW\bW^{\sf H}\right) \leq P,
\end{align}
where $(a)$ comes from ${\pmb \Phi}_{\beta_{\sf bs}} = \bI_{N}-{\pmb \Phi}_{\alpha_{\sf bs}}$, ${\rm tr}\big({\pmb \Phi}_{\alpha_{\sf bs}}\bW\bW^{\sf H}{\pmb \Phi}_{\alpha_{\sf bs}}\big) = {\rm tr}\big({\pmb \Phi}_{\alpha_{\sf bs}}^2\bW\bW^{\sf H}\big) $, and the definition of the trace operation.

Regarding the SE point of view, using the maximum transmit power $P$ maximizes the SE for a given precoder. 
For the EE point of view, however, using the maximum transmit power may decrease the EE because it increases the SE with a logarithmic scale while it increases the power consumption with a (approximately) linear scale. For this reason, using the maximum power without adjusting a power level, the EE converges to zero eventually as shown in \cite{vlachos:twc:21}. 
To prevent this phenomenon, an optimal power scaling solution needs to be found to maximize the EE. To this end, we introduce a scalar weight $\tau:0<\tau \leq 1$ in the power constraint as
\begin{align}
    \label{eq:const_ee_reform}
    {\rm tr}\left(\bbE\big[\bx_{\sf q}\bx_{\sf q}^{\sf H}\big]\right) = \tau P.
\end{align}
Then applying \eqref{eq:const_ee_reform} to \eqref{eq:const_se_simple}, the power constraint reduces to
\begin{align}
    \label{eq:const_ee_with_tau}
    {\rm tr}\left({\pmb \Phi}_{\alpha_{\sf bs}}\bW\bW^{\sf H}\right) \leq \tau.
\end{align}

To incorporate the power scaling in the precoder, we define a weighted and normalized precoding matrix $\bV = [\bv_1,\dots,\bv_K]$ where
\begin{align}
    \label{eq:v_def}
        \bv_k = \frac{1}{\sqrt{\tau}}{\pmb \Phi}_{\alpha_{\sf bs}}^{1/2}\bw_k  \in \bbC^{N}.
\end{align}
Then, applying \eqref{eq:v_def} to \eqref{eq:sinr_reform}, the SE of user $k$ is re-written as a function of $\tau$ and $\bV$, which is
\begin{align}
    \label{eq:ee_Rk}
   &\bar{R}_k(\bV,\tau) 
   \\    \nonumber 
   &=\log_2\left(1+\frac{\tau\alpha_k|\bh^{\sf H}_k {\pmb \Phi}_{\alpha_{\sf bs}}^{1/2}\bv_k|^2}
    { \tau\sum_{\ell = 1}^{K}  |\bh^{\sf H}_k{\pmb \Phi}_{\alpha_{\sf bs}}^{1/2} \bv_\ell|^2 -\tau \alpha_k|\bh_k^{\sf H}{\pmb \Phi}_{\alpha_{\sf bs}}^{1/2}\bv_k|^2  +  \tau\sum_{\ell = 1}^{K}\bv_\ell^{\sf H}{\pmb \Phi}_{\beta_{\sf bs}}{\rm diag}\left(\bh_k\bh_k^{\sf H}\right)\bv_\ell+ \sigma^2/P }\right).
\end{align}
Using \eqref{eq:const_ee_with_tau}, \eqref{eq:v_def}, and \eqref{eq:ee_Rk}, the problem in \eqref{eq:problem_ee_dinkelbach} becomes
\begin{align} 
    \label{eq:problem_ee_dinkelbach2}
    \mathop{{\text{maximize}}}_{\bV, \tau, \mu}& \;\; \sum_{k=1}^K \bar{R}_k(\bV, \tau) - \mu {P_{\sf BS}(N, \bb_{\sf DAC}, f_s, \bV, \tau)} \\
    \label{eq:const_ee_dinkelbach2}
    {\text{subject to}} & \;\;  {\rm tr}\left(\bV\bV^{\sf H}\right) = 1
    \\
    & \;\;  0<\tau \leq 1
    \\
    & \;\;  \mu >0.
\end{align}
Note that the power constraint is equivalent to $\|\bar \bv\|^2=1$ where $\bar \bv = {\rm vec}(\bV)$, which is interpreted as a directional component of the precoder.
Consequently, the EE maximization problem is now a problem of jointly finding an optimal power level $\tau$ and precoding direction $\bar \bv$.
In this regard, we decompose \eqref{eq:problem_ee_dinkelbach2} into two subsequent problems: find $\bV$ for given $\tau$ and find $\tau$ for a given $\bV$. With the obtained $\tau$ and $\bV$, we iteratively update the Dinkelbach variable $\mu$. 


\subsection{Proposed Algorithm}

\subsubsection{Optimal Direction $\bV^\star$}

we solve the problem in \eqref{eq:problem_ee_dinkelbach2} regarding $\bV$ for given $\tau$ and $\mu$.
Again,  this phase is mainly related to designing an optimal direction of precoding since the power constraint becomes $\|\bar \bv\|^2 = 1$.
Let  $\bG_k = ({\pmb \Phi}_{\alpha_{\sf bs}}^{1/2})^{\sf H} {\bf{h}}_k {\bf{h}}_k^{\sf H}{\pmb \Phi}_{\alpha_{\sf bs}}^{1/2} + {\pmb \Phi}_{\beta_{\sf bs}}{\rm diag}\left(\bh_k\bh_k^{\sf H}\right)$. 
Then leveraging the fact that $\|\bar \bv\|^2=1$, we cast the SE $\bar{R}_k$ in \eqref{eq:ee_Rk}  into a Rayleigh quotient form as 
\begin{align}
    \label{eq:se_reform_ee}
    \bar{R}_k = \log_2\left(\frac{\bar{\bv}^{\sf H}\bA_k\bar{\bv}}{\bar{\bv}^{\sf H}\bB_k\bar{\bv}}\right)
\end{align}
where
\begin{align}
    &{\bf{A}}_k = {\rm blkdiag} \big(\bG_k, \cdots, \bG_k\big) + \frac{ \sigma^2}{\tau P} {\bf{I}}_{NK} , \\
    &{\bf{B}}_{k} = {\bf{A}}_{k} - {\rm blkdiag} \left({\bf{0}}_{N\times N}, \cdots, \alpha_k({\pmb \Phi}_{\alpha_{\sf bs}}^{1/2})^{\sf H}{\bf{h}}_k {\bf{h}}_k^{\sf H}{\pmb \Phi}_{\alpha_{\sf bs}}^{1/2} , \cdots, {\bf 0}_{N\times N} \right).
\end{align}
Consequently, we have the following problem for given $\tau$ and $\mu$ 
\begin{align}
    \label{eq:problem_ee_v}
    \mathop{{\text{maximize}}}_{\bar{\bf{v}}}& \;\; \sum_{k = 1}^{K} \log_2\left(\frac{\bar{\bv}^{\sf H}\bA_k\bar{\bv}}{\bar{\bv}^{\sf H}\bB_k\bar{\bv}}\right)  - \mu \sum_{n = 1}^{N} \big(2\,P_{\sf DAC}(b_{{\sf DAC},n},f_s) + P_{\sf RF}\big)  \mathbbm{1}_{\{n \in \CMcal{A}\}}. 
\end{align}
Note that the constraint in \eqref{eq:const_ee_dinkelbach2} is ignored at this step.
In the proposed algorithm, however, $\bar \bv$ will be normalized, which indeed satisfies the constraint. 

Now, the major challenge in solving the problem in \eqref{eq:problem_ee_v} is to handle the indicator function $\mathbbm{1}_{\{n \in \CMcal{A}\}}$.
It is, however, highly difficult due to the lack of smoothness. 
To resolve this challenge, we transform the indicator function into a more favorable form as follows:
let $\tilde {\bf{w}}_{n}$ be the $n$th row of the precoder $\bW$. Then we clarify that  antenna $n$ is active if and only if $\| \tilde {\bf{w}}_{n} \|^2 > 0$.
Equivalently, we consider $\mathbbm{1}_{\{n \in \CMcal{A}\}} = \mathbbm{1}_{\{\|\frac{1}{\sqrt{\alpha_{{\sf bs},n}}}\tilde {\bf{v}}_n\|^2 > 0\}}$ because $\bW \propto {\pmb \Phi}_{\alpha_{\sf bs}}^{-1/2}\bV$ from the definition in \eqref{eq:v_def},  where $\tilde {\bf{v}}_n$ is the $n$th row vector of $\bV$. 
We remark that we omit the effect of $\tau$ since it applies identically to all the antennas, but the quantization loss term is included so that the quantization effect can be applied differently across antennas. Subsequently, we approximate the indicator function by using the following approximation \cite{Sriperumbudur2010AMA}:
\begin{align} \label{eq:indic_approx}
   \mathbbm{1}_{\{|x|^2>0\}} \approx \frac{\log_2 (1 + |x|^2 / \rho)}{\log_2 (1 + 1/\rho)},
\end{align} 
where the approximation becomes tight as $\rho \rightarrow 0$. Using \eqref{eq:indic_approx}, the indicator function in \eqref{eq:P_cir} is represented as 
\begin{align}
    \mathbbm{1}_{\{n \in \CMcal{A}\}} = \mathbbm{1}_{\{\|\tilde {\bf{v}}_n/{\sqrt{\alpha_{{\sf bs},n}}} \|^2 > 0\}} \approx \log_2 \left(1 + {\rho}^{-1} \left\| \frac{1}{\sqrt{\alpha_{{\sf bs},n}}}\tilde {\bf{v}}_n \right\|^2 \right)^{\omega_{\rho}},
\end{align}
where $\omega_{\rho} = 1/\log_2(1 + \rho^{-1})$ and $\rho>0$ is a small enough value. Further denoting that $P_{{\sf ant},n} = 2\,P_{\sf DAC}(b_{{\sf DAC},n},f_s) + P_{\sf RF}$, we have
\begin{align}
    \label{eq:indicator_approx}
     \sum_{n = 1}^{N}  \big(2\,P_{\sf DAC}(b_{{\sf DAC},n},f_s) + P_{\sf RF}) \mathbbm{1}_{\{n \in \CMcal{A}\}}\approx  \sum_{n = 1}^{N} \log_2 \left(1 + \rho^{-1} 
     \left\|\frac{1}{\sqrt{\alpha_{{\rm bs},n}}}\tilde {\bf{v}}_n \right\|^2  \right)^{\omega_{\rho} P_{{\sf ant},n} }.
\end{align}
The next step is to express \eqref{eq:indicator_approx} in terms of the $\bar {\bf{v}}$. To this end, we let $\be_{n}$ be the $N$ dimensional  $n$th canonical basis vector with a single $1$ at its $n$th coordinate and zeros elsewhere. 
Then we can write $\frac{1}{\sqrt{\alpha_{{\rm bs},n}}}\tilde\bv_n$ as $\tilde \bv_n = \be_n^{\sf H}{\pmb \Phi}_{\alpha_{\sf bs}}^{-1/2}\bV$.
Subsequently, we rewrite $\|\frac{1}{\sqrt{\alpha_{{\rm bs},n}}}\tilde \bv_n\|^2$ as
\begin{align}
   \left\|\frac{1}{\sqrt{\alpha_{{\rm bs},n}}}\tilde \bv_n\right\|^2 &= \be_n^{\sf H}{\pmb \Phi}_{\alpha_{\sf bs}}^{-1/2}\bV\bV^{\sf H}{\pmb \Phi}_{\alpha_{\sf bs}}^{-1/2}\be_n
   \\
   & = {\rm vec}(\tilde\be_n^{\sf H}\bV\bV^{\sf H}\tilde\be_n)
   \\
   &\stackrel{(a)} = \left(\left(\tilde\be_n^{\sf T}\bV^*\right) \otimes \tilde\be_n^{\sf H}\right){\rm vec}(\bV)
   \\
   & \stackrel{(b)}= \left(\left(\left(\bI_K \otimes \tilde\be_n^{\sf T}\right){\rm vec}(\bV^*)\right)^{\sf T} \otimes \tilde\be_n^{\sf H}\right){\rm vec}(\bV)
   \\\label{eq:v_row_reform}
   & \stackrel{(c)}={\bar\bv}^{\sf H}\left(\bI_K \otimes \tilde\be_n \otimes \tilde\be_n^{\sf H}\right){\bar\bv},
\end{align}
where $\tilde\be_n = {\pmb \Phi}_{\alpha_{\sf bs}}^{-1/2}\be_n$, $(a)$ and $(b)$ follow from ${\rm vec}(\bA\bB\bC) = (\bC^{\sf T}\otimes \bA){\rm vec}(\bB)$, and $(c)$ comes from $(\bA\otimes\bB)(\bC\otimes\bD)= (\bA\bC)\otimes(\bB\bD)$ and  the definition of $\bar \bv = {\rm vec}(\bV)$.
Using \eqref{eq:indicator_approx} and \eqref{eq:v_row_reform}, 
the objective function in \eqref{eq:problem_ee_v} can be represented as 
\begin{align}
    \label{eq:problem_ee_object_compact}
    L_{v}(\bar\bv) = \log_2 \lambda_{v}(\bar \bv),
\end{align}
where 
\begin{align}
    \lambda_{v}(\bar \bv)  = \prod_{k=1}^{K}\left(\frac{{\bar \bv}^{\sf H} \bA_k\bar \bv}{{\bar \bv}^{\sf H} \bB_k\bar \bv}\right) \prod_{n = 1}^{N} \left(\bar {\bf{v}}^{\sf H} {\bf{E}}_n \bar {\bf{v}} \right)^{-\mu\omega_{\rho} P_{{\sf ant},n}}
\end{align}
and $ {\bf{E}}_n  =  \bI_{NK} + \rho^{-1}\bI_K \otimes \tilde\be_n \otimes \tilde\be_n^{\sf H}$ as $\|\bar \bv\|^2 = 1$.
With the reformulated objective function in \eqref{eq:problem_ee_object_compact}, we derive a condition for a local-optimal stationary point and propose an algorithm to find such a local optimal point.
\begin{lemma}
	\label{lem:kkt_ee}
	 The first-order optimality condition of the optimization problem \eqref{eq:problem_ee_v} with the approximated objective function in \eqref{eq:problem_ee_object_compact} for given $\tau$ and $\mu$ is satisfied if the following holds:
	\begin{align}
		\label{eq:kkt_ee}
	    {\bf{A}}_{\sf KKT} (\bar{\bf{ v}}) \bar {\bf{v}}  = \lambda_{v}(\bar {\bf{v}}) {\bf{B}}_{\sf KKT} (\bar {\bf{v}}) \bar {\bf{v}}, 
	\end{align}
	where
	\begin{align}
		\label{eq:Akkt}
		&{\bf{A}}_{\sf KKT} (\bar {\bf{v}}) = \sum_{k = 1}^{K} \frac{{\bf{A}}_k }{ \bar {\bf{v}}^{\sf H} {\bf{A}}_k \bar {\bf{v}}}\prod_{\ell=1}^{K}\left(\bar{\bv}^{\sf H}\bA_\ell\bar{\bv}\right),
		\\\label{eq:Bkkt}
		&{\bf{B}}_{\sf KKT} (\bar {\bf{v}}) =  \left(\sum_{k = 1}^{K} \frac{{\bf{B}}_k}{\bar {\bf{v}}^{\sf H} {\bf{B}}_k \bar {\bf{v}}}+ \mu\omega_\rho\sum_{n=1}^N P_{{\sf ant},n} \frac{ \bE_n}{\bar\bv^{\sf H} \bE_n \bar\bv}\right) \prod_{\ell=1}^{K}\left(\bar{\bv}^{\sf H}\bB_\ell\bar{\bv}\right)\prod_{m=1}^{N}\left(\bar{\bv}^{\sf H}\bE_m\bar{\bv}\right)^{\mu\omega_\rho P_{{\sf ant},m}}.
	\end{align}
\end{lemma}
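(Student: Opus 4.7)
The plan is to treat $L_v(\bar\bv)=\log_2 \lambda_v(\bar\bv)$ as an unconstrained objective---the unit-norm constraint is deliberately dropped at this stage per the remark following \eqref{eq:problem_ee_v}---then compute the Wirtinger conjugate gradient $\nabla_{\bar\bv^*} L_v$ term by term, set it to zero, and rescale by a single positive scalar so that the advertised generalized-eigenvalue-type identity \eqref{eq:kkt_ee} emerges. Because $\log_2$ is strictly increasing, the stationary points of $L_v$ and of $\lambda_v$ coincide, which lets me work with the additive sum-of-logarithms form rather than the product form and avoid invoking the quotient rule three times over.

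Concretely, I would first observe that $\bA_k$, $\bB_k$, and $\bE_n$ are all Hermitian: $\bG_k$ is a Hermitian rank-one term plus a real-diagonal matrix, so $\bA_k$ and $\bB_k$ (which differs from $\bA_k$ by a Hermitian block) inherit Hermiticity, and $\bE_n$ is the identity plus a real outer-product Kronecker term. This legitimizes the identity $\nabla_{\bar\bv^*} \ln(\bar\bv^{\sf H} \bM \bar\bv) = \bM\bar\bv/(\bar\bv^{\sf H} \bM \bar\bv)$ applied term by term to
$$\ln 2 \cdot L_v(\bar\bv) = \sum_{k=1}^{K}\ln(\bar\bv^{\sf H}\bA_k\bar\bv) - \sum_{k=1}^{K}\ln(\bar\bv^{\sf H}\bB_k\bar\bv) - \mu\omega_\rho\sum_{n=1}^{N}P_{{\sf ant},n}\ln(\bar\bv^{\sf H}\bE_n\bar\bv),$$
so that $\nabla_{\bar\bv^*}L_v(\bar\bv)={\bf 0}$ reads
$$\sum_{k=1}^{K}\frac{\bA_k\bar\bv}{\bar\bv^{\sf H}\bA_k\bar\bv} = \sum_{k=1}^{K}\frac{\bB_k\bar\bv}{\bar\bv^{\sf H}\bB_k\bar\bv} + \mu\omega_\rho\sum_{n=1}^{N}P_{{\sf ant},n}\frac{\bE_n\bar\bv}{\bar\bv^{\sf H}\bE_n\bar\bv}.$$

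To match \eqref{eq:kkt_ee}, I would multiply both sides by the positive scalar $\prod_{\ell=1}^{K}\bar\bv^{\sf H}\bA_\ell\bar\bv$. The left-hand side becomes $\bA_{\sf KKT}(\bar\bv)\bar\bv$ directly in the form \eqref{eq:Akkt}. For the right-hand side, I would substitute the factorization
$$\prod_{\ell=1}^{K}\bar\bv^{\sf H}\bA_\ell\bar\bv = \lambda_v(\bar\bv)\cdot\prod_{\ell=1}^{K}\bar\bv^{\sf H}\bB_\ell\bar\bv\cdot\prod_{m=1}^{N}(\bar\bv^{\sf H}\bE_m\bar\bv)^{\mu\omega_\rho P_{{\sf ant},m}},$$
which is just the definition of $\lambda_v$ rearranged. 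Distributing the two remaining product factors into the bracketed sum-of-ratios reproduces $\bB_{\sf KKT}(\bar\bv)\bar\bv$ of \eqref{eq:Bkkt}, with the leftover $\lambda_v(\bar\bv)$ serving as the scalar multiplier---this is exactly \eqref{eq:kkt_ee}. The main obstacle is purely bookkeeping, namely splitting $\prod_{\ell}\bar\bv^{\sf H}\bA_\ell\bar\bv$ cleanly between the $\lambda_v(\bar\bv)$ prefactor and the ratio-normalization inside $\bB_{\sf KKT}$; a subtlety worth flagging is that because $\lambda_v$ is not homogeneous of degree zero in $\bar\bv$ (owing to the quantization-power exponents), the condition \eqref{eq:kkt_ee} pins down the optimizer's direction only, and the ensuing algorithm must renormalize the iterate to the unit sphere to satisfy \eqref{eq:const_ee_dinkelbach2}.
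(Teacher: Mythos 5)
Your proposal is correct and follows essentially the same route as the paper: compute the (logarithmic) Wirtinger gradient of $L_v$ term by term, set it to zero, and multiply through by $\prod_{\ell}\bar{\bv}^{\sf H}\bA_\ell\bar{\bv}=\lambda_v(\bar\bv)\prod_{\ell}(\bar{\bv}^{\sf H}\bB_\ell\bar{\bv})\prod_{m}(\bar{\bv}^{\sf H}\bE_m\bar{\bv})^{\mu\omega_\rho P_{{\sf ant},m}}$ to recover the generalized-eigenvalue form \eqref{eq:kkt_ee}. The Hermiticity check and the remark about renormalizing the iterate are sensible additions but do not change the argument.
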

\begin{proof}
We compute the first derivative of  \eqref{eq:problem_ee_object_compact} as
\begin{align}
    \nonumber
	\frac{\partial L_{v}(\bar{\bf{v}})}{\partial \bar{\bf{v}}^{\sf H}} = \frac{1}{\lambda_{v}(\bar\bv)\ln 2}\frac{\partial\lambda_{v}({\bar\bv})}{\partial \bar\bv^{\sf H}}
\end{align}
and subsequently, we derive ${\partial\lambda_{v}({\bar\bv})}/{\partial \bar\bv^{\sf H}}$ and set it to zero
\begin{align}
    \label{eq:d_lambda_ee}
	\frac{\partial \lambda_{v}(\bar{\bv})} {\partial \bar{\bv}^{\sf H}} &= 2 \lambda_{v}(\bar\bv)\left(\sum_{k = 1}^{K}   \left( \frac{{\bf{A}}_k\bar {\bf{v}} }{ \bar {\bf{v}}^{\sf H} {\bf{A}}_k \bar {\bf{v}} }  - \frac{{\bf{B}}_k\bar {\bf{v}} }{ \bar {\bf{v}}^{\sf H} {\bf{B}}_k \bar {\bf{v}} } \right) - \mu\omega_\rho\sum_{n=1}^N  P_{{\sf ant},n} \frac{\bE_n\bar\bv}{\bar \bv^{\sf H}\bE_n\bar \bv}\right)
	\\ 
	&= 0.
\end{align}
Then, the first-order optimality condition can be derived as
\begin{align}
	\sum_{k = 1}^{K} \frac{{\bf{A}}_k }{ \bar {\bf{v}}^{\sf H} {\bf{A}}_k \bar {\bf{v}}}\lambda_{v,\sf num}(\bar \bv) \bar {\bf{v}} = \lambda_{v}(\bar {\bf{v}})\left(\sum_{k = 1}^{K} \frac{{\bf{B}}_k }{ \bar {\bf{v}}^{\sf H} {\bf{B}}_k \bar {\bf{v}} }+\mu\omega_\rho\sum_{n=1}^N P_{{\sf ant},n} \frac{ \bE_n}{\bar\bv^{\sf H} \bE_n \bar\bv}\right)\lambda_{v,\sf denom}(\bar \bv)\bar{\bf{v}},
\end{align}
where $\lambda_{v,\sf num}(\bar \bv) =\prod_{\ell=1}^{K}\left(\bar{\bv}^{\sf H}\bA_\ell\bar{\bv}\right)$ and $\lambda_{v,\sf denom}(\bar \bv) = \prod_{\ell=1}^{K}\left(\bar{\bv}^{\sf H}\bB_\ell\bar{\bv}\right)\prod_{m=1}^{N}\left(\bar{\bv}^{\sf H}\bE_m\bar{\bv}\right)^{\mu\omega_\rho P_{{\sf ant},m}}$.
This completes the proof. 
\end{proof}

We interpret the derived condition \eqref{eq:kkt_ee} as a functional eigenvalue problem.

 \begin{remark} \normalfont
 	\label{rm:eig_problem}
 	The derived first-order optimality condition in \eqref{eq:kkt_ee} can be transformed to a functional eigenvalue problem regarding the matrix ${\bf{B}}^{-1}_{\sf KKT}(\bar {\bf{v}}){\bf{A}}_{\sf KKT}(\bar {\bf{v}})$ as
	\begin{align}
		\label{eq:eig_problem}
		c\bar\bv = \lambda(\bar\bv)\bar\bv.
	\end{align}
\end{remark}
From Remark~\ref{rm:eig_problem}, the first-order optimality condition in \eqref{eq:eig_problem} is cast as a functional eigenvalue problem. 
More specifically,  \eqref{eq:eig_problem} is  included  in  a  class  of  nonlinear  eigenvector  dependent  eigenvalue  problem  (NEPv) \cite{cai:siam:18}, where the matrix itself is a function of eigenvectors.
Consequently, treating $\bar {\bf{v}}$ as an eigenvector of ${\bf{B}}_{\sf KKT}^{-1}(\bar {\bf{v}}) {\bf{A}}_{\sf KKT}(\bar {\bf{v}})$,  $\lambda(\bar {\bf{v}})$ is interpreted as a corresponding eigenvalue of ${\bf{B}}_{\sf KKT}^{-1}(\bar {\bf{v}}) {\bf{A}}_{\sf KKT}(\bar {\bf{v}})$. 
In this regard, we need to find the principal eigenvector of ${\bf{B}}^{-1}_{\sf KKT}(\bar {\bf{v}}) {\bf{A}}_{\sf KKT}(\bar {\bf{v}})$ that leads $\lambda(\bar\bv)$ to be a maximum eigenvalue of ${\bf{B}}^{-1}_{\sf KKT}(\bar {\bf{v}}) {\bf{A}}_{\sf KKT}(\bar {\bf{v}})$ and also satisfies the first-order optimality condition in \eqref{eq:kkt_ee}. 
To find such $\bar{\bv}$, we propose a quantization-aware generalized power iteration-based direction optimization algorithm (Q-GPI-DO). 


\begin{algorithm} [t]
\caption{Q-GPI-DO} \label{alg:se} 
{\bf{initialize}}: $\bar {\bf{v}}^{(0)}$\\
Set the iteration count $t = 1$\\
\While {$\left\|\bar {\bf{v}}^{(t)} - \bar {\bf{v}}^{(t-1)} \right\| > \epsilon$ \& $t \leq t_{\max} $}{
Build matrix $ {\bf{A}}_{\sf KKT} (\bar {\bf{v}}^{(t-1)})$ in \eqref{eq:Akkt}\\
Build matrix ${\bf{B}}_{\sf KKT} (\bar {\bf{v}}^{(t-1)})$ in \eqref{eq:Bkkt} \\
Compute $\bar {\bf{v}}^{(t)} = {\bf{B}}^{-1}_{\sf KKT} (\bar {\bf{v}}^{(t-1)}) {\bf{A}}_{\sf KKT} (\bar {\bf{v}}^{(t-1)}) \bar {\bf{v}}^{(t-1)}$ \\
Normalize $\bar {\bf{v}}^{(t)} = \bar {\bf{v}}^{(t)}/\left\| \bar {\bf{v}}^{(t)}\right\|$\\
 $t \leftarrow t+1$}
$\bar {\bf{v}}^\star \leftarrow \bar {\bf{v}}^{(t)}$\\
\Return{\ }{$\bar \bv^\star $}.
\end{algorithm}

Algorithm~\ref{alg:se} describes the proposed algorithm. 
The key idea used in Q-GPI-DO is to modify a power iteration method to be appliable for the corresponding functional eigenvalue problem \eqref{eq:eig_problem}. Specifically, Algorithm~\ref{alg:se} first initializes the stacked precoding vector $\bar \bv^{(0)}$.
Then, the precoding vector $\bar \bv^{(0)}$ is updated  iteratively: at iteration $t$,  the matrices ${\bf{A}}_{\sf KKT} (\bar {\bf{v}}^{(t-1)})$ and ${\bf{B}}_{\sf KKT} (\bar {\bf{v}}^{(t-1)})$ are computed according to \eqref{eq:Akkt} and \eqref{eq:Bkkt}. 
Subsequently, the precoder $\bar {\bf{v}}^{(t)}$ is re-computed as $\bar {\bf{v}}^{(t)} = {\bf{B}}^{-1} _{\sf KKT} (\bar {\bf{v}}^{(t-1)}){\bf{A}}_{\sf KKT} (\bar {\bf{v}}^{(t-1)}) \bar {\bf{v}}^{(t-1)}$ and normalized as $\bar {\bf{v}}^{(t)} = \bar {\bf{v}}^{(t)}/ \left\| \bar {\bf{v}}^{(t)} \right\|$. 
The iteration stops when one of the stopping criteria are met: either converges, i.e., $\| \bar {\bf{v}}^{(t) }- \bar {\bf{v}}^{(t-1)}\| \le \epsilon$ where  $\epsilon > 0 $ denotes a tolerance level or reaches a maximum iteration count $t_{\rm max}$ which may differ depending on a system requirement. 

{\textcolor{black}{
One remarkable benefit of the proposed Q-GPI-DO is that it is not required to exploit any off-the-shelf optimization solver such as CVX. 
Distinguished from other convex-relaxation based approaches, the only computational load of the proposed method is caused when calculating $\bar {\bf{v}}^{(t)} = {\bf{B}}^{-1} _{\sf KKT} (\bar {\bf{v}}^{(t-1)}){\bf{A}}_{\sf KKT} (\bar {\bf{v}}^{(t-1)}) \bar {\bf{v}}^{(t-1)}$. 
As implementing CVX in a practical hardware is nearly infeasible due to the high complexity, Q-GPI-DO has a substantial advantage from a implementation perspective.  
}}




\subsubsection{Optimal Power Scaling $\tau^\star$}
the problem in \eqref{eq:problem_ee_dinkelbach} for given  $\bV$ and $\mu$ reduces to
\begin{align}
    \label{eq:problem_ee_tau}
    \mathop{{\text{maximize}}}_{\tau}& \;\; \sum_{k=1}^K \bar{R}_k -\frac{\mu P}{\kappa} \tau
    \\
    {\text{subject to}}  & \;\;  0<\tau\leq 1 
    \\
    & \;\;  \mu > 0.
\end{align}
The objective function in \eqref{eq:problem_ee_tau} is now concave  with respect to $\tau$. 
In this regard, the optimal $\tau$ can be derived by using a gradient descent algorithm for given $\bar \bv$ and $\mu$.
Let the objective function in \eqref{eq:problem_ee_tau} be $ L_{\tau}(\tau,\bar \bv)$. 
Then the gradient update is given as
\begin{align}
    \label{eq:gd_update}
    \tau^{(t+1)} = \tau^{(t)} + \delta_{\sf GD}\frac{\partial L_{\tau}(\tau^{(t)},\bar\bv)}{\partial \tau^{(t)}},
\end{align}
where $\delta_{\sf GD}$ denotes a step size and  the partial derivative  $\partial L_{\tau}(\tau^{(t)},\bar \bv)/\partial \tau^{(t)}$ is computed as
\begin{align}
    \label{eq:gradient}
    \frac{\partial  L_\tau(\tau,\bar \bv)}{\partial \tau}=\frac{1}{\ln 2}\sum_{k=1}^K \left(\frac{\Xi_k+\Psi_k}{\sigma^2/P+(\Xi_k+\Psi_k)\tau}-\frac{\Psi_k}{\sigma^2/P+\Psi_k\tau}\right) - \frac{\mu P}{\kappa},
\end{align}
where
\begin{align}
    &\Xi_k = \alpha_k|\bh^{\sf H}_k {\pmb \Phi}_{\alpha_{\sf bs}}^{1/2}\bv_k|^2, 
    \\
    &\Psi_k = \sum_{\ell = 1}^{K}  |\bh^{\sf H}_k{\pmb \Phi}_{\alpha_{\sf bs}}^{1/2} \bv_\ell|^2 - \alpha_k|\bh_k^{\sf H}{\pmb \Phi}_{\alpha_{\sf bs}}^{1/2}\bv_k|^2  +  \sum_{\ell = 1}^{K}\bv_\ell^{\sf H}{\pmb \Phi}_{\beta_{\sf bs}}{\rm diag}\left(\bh_k\bh_k^{\sf H}\right)\bv_\ell.
\end{align}
To decide $\delta_{\sf GD}$, we use a backtracking line search method \cite{} in simulations. 

Recall that we have the constraint $0<\tau \leq 1$. 
From \eqref{eq:gd_update}, however, it is not guaranteed to satisfy the constraint. 
Accordingly, for each update, we perform thresholding of $\tau$; if $\tau>1$, then set $\tau= 1$, and if $\tau<0$, then set $\tau = 0$.

\subsubsection{Dinkelbach Update $\mu$}

Once the precoder $\bV$ and scaling factor $\tau$ are derived, we update $\mu$ for next outer iteration as
\begin{align}
    \label{eq:mu_update}
    \mu =\frac{\sum_{k = 1}^{K} \bar{R}_k\left(\bV,\tau\right)}{P_{\sf LO}  + \sum_{m = 1}^{N}\log_2 \left(1 + {\rho}^{-1} \left\| \frac{1}{\sqrt{\alpha_{{\sf bs},m}}}\tilde {\bf{v}}_m \right\|^2 \right)^{\omega_{\rho}}\big(2\,P_{\sf DAC}(b_{{\sf DAC},m},f_s) + P_{\sf RF}\big) + {\tau P}/{\kappa}}.
\end{align}
\begin{algorithm} [t]
\caption{Q-GPI-EEM} \label{alg:ee} 
{\bf{initialize}}: $\bar {\bf{v}}^{(0)}$, $\tau^{(0)}$, and $\mu^{(0)}$\\ 
Set the iteration count $n_0= 1$\\
\While {$|\mu^{(t_0)} - \mu^{(t_0-1)}|/|\mu^{(t_0)}| > \epsilon_0 $ \& $t_0 \leq t_{\max} $}{
Set the iteration count $t_1 = 1$\\
\While {$\left\| {\bf{W}}^{(t_1)} - {\bf{W}}^{(t_1-1)} \right\|_{F} /\left\| {\bf{W}}^{(t_1)}\right\|_{F}> \epsilon_1$ \& $t_1 \leq t_{\max} $}{
Set the iteration count $n_2 = 1$\\
\While {$|\tau^{(t_2)} - \tau^{(t_2-1)}|/|\tau^{(t_2)}| > \epsilon_2$ \& $t_2 \leq t_{\max} $}{
Set $\partial L_{\tau}(\tau^{(t_2)}, \bar{\bf{v}})/\partial \tau^{(t_2)}$ according to \eqref{eq:gradient}\\
Update $\tau^{(t_2)}$ according to  \eqref{eq:gd_update}\\
Thresholding $\tau$\\
$t_2 \leftarrow t_2 + 1$
}
$\bar {\bf v} = \text{Q-GPI-DO}(\bA_{\sf KKT}(\bar{\bv}, \tau^{(t_2)}), \bB_{\sf KKT}(\bar{\bv},\tau^{(t_2)}))$\\
Compute ${\bW}^{(t_1)} = \sqrt{\tau^{(t_2)}}{\pmb \Phi}_{\alpha_{\sf bs}}^{-1/2} \left[{\bf{v}}_1, {\bf{v}}_2, \dots, {\bf{v}}_K\right] $\\
 $t_1 \leftarrow t_1+1$
}
Update $\mu^{(t_0)}$ according to \eqref{eq:mu_update}\\
 $t_0 \leftarrow t_0+1$
}
Set $\tilde \bw_n^{(t_1)} = {\bf 0}_{1\times K}$ if $\|\frac{1}{\sqrt{\alpha_{{\rm bs},n}}}\hat \bv_{n}\|^2< \epsilon_{\sf as}$, $ n = 1,\dots, N$\\
\Return{\ }{${\bW}^{(t_1)}$}
\end{algorithm}

Algorithm~\ref{alg:ee} describes the proposed quantization-aware GPI-based EE maximization algorithm (Q-GPI-EEM).
With initialization, the algorithm computes $\tau$ by using the gradient descent method for given $\bar \bv$ and $\mu$ and also finds $\bar \bv$ by using  the Q-GPI-DO algorithm for given $\tau$ and $\mu$. 
Then $\bW$ is computed from the derived $\tau$ and $\bar \bv$. 
The algorithm repeats these steps until $\bW$ converges. 
Once converged, the outer loop computes $\mu$ by using the derived $\bW$ and repeats the previous steps until $\mu$ converges. 
Once $\mu$ converged, we check whether  $\|\frac{1}{\sqrt{\alpha_{{\rm bs},n}}}\tilde \bv_{n}\|^2\geq \epsilon_{\sf as}$ to select antennas which have  effective gain where $\epsilon_{\sf as} > 0$ is a small enough value. 
This selection approach is effective as the rows of $\bV$ are jointly designed with each other and thus, the designed $\bV$ incorporates relative gains across the antennas.
The norm of $\tilde \bv_n$, however, highly depends on the number of antennas  because the norm of $\bar \bv$ is limited as $\|\bar \bv\| = 1$.
In this regard, the threshold $\epsilon$ may vary with $N$.
To avoid this issue, we normalize $\tilde \bv_n$ by  ${\max_i \|\frac{1}{\sqrt{\alpha_{{\rm bs},i}}}\tilde \bv_i\|}$, i.e., we perform 
\begin{align}
    \hat \bv_n = \tilde \bv_n /{\max_i \left\|\frac{1}{\sqrt{\alpha_{{\rm bs},i}}}\tilde \bv_i\right\|}
\end{align}
and set $\tilde \bw_n = {\bf 0}_{1\times K}$ if $\|\frac{1}{\sqrt{\alpha_{{\rm bs},n}}}\hat \bv_{n}\|^2< \epsilon_{\sf as}$ $\forall n$, where $\tilde \bw_n$ represents the $n$th row of $\bW$.
Finally, the proposed method returns $\bW$.

\begin{remark}[Complexity] \normalfont
The computational complexity for each step in the proposed Q-GPI-EEM is analyzed as follows. 
In the Q-GPI-DO step, the main computational load is caused by calculating the inversion of ${\bf{B}}_{\sf KKT}(\bar {\bf{f}})$. Since ${\bf{B}}_{\sf KKT}(\bar {\bf{f}})$ is comprised of $NK \times NK$ block-diagonal matrices, the inversion can be obtained by calculating the inversion of each sub-matrix. This results in that the complexity of Q-GPI-EEM per iteration is characterized as $\cO (KN^3)$. In the precoding power scaling optimization step, we only find a scalar power scaling value, its computational complexity per iteration is $\cO(1)$. 
As a result, the total computational complexity of the proposed Q-GPI-EEM is an order of $\cO (KN^3)$. 

\end{remark}

{\textcolor{black}{
\begin{remark}[Convergence] \normalfont
It is challenging to prove the convergence of the proposed method rigorously. The main obstacle is Q-GPI-DO. 
As explained above, through the lens of a functional eigenvalue problem, Q-GPI-DO is interpreted as a solution of a NEPv \cite{cai:siam:18}. 
Conventionally, in the NEPv, the self-consistent field iteration (SCF) can be used to solve the problem. The SCF iteratively performs the eigenvector decomposition based on the previously obtained eigenvectors. This is similar to our Q-GPI-DO in the sense that Q-GPI-DO iteratively performs the power iteration based on the previously obtained the principal eigenvector. 
Considering the relationship between the eigenvector decomposition and the power iteration, the SCF and Q-GPI-DO are closely related. For instance, in a typical eigenvalue problem, it is well known that the power iteration can track the results of the eigenvector decomposition. 
Extending this notion to the NEPv is a key for the convergence proof of the proposed method. We leave this as future work. 
In the later section, we empirically show that our method converges very well in general system environments. 
\end{remark}
}}

\section{Special Case: Sum Spectral Efficiency Maximization}

We can show that a sum SE maximization problem in the considered massive MIMO system with low-resolution DACs and ADCs is regarded as a special case of the EE maximization.
Accordingly, we provide a brief description of finding precoders that maximize the sum SE by exploiting the results derived in the EE.
The sum SE maximization problem is formulated by setting $\mu$ in \eqref{eq:problem_ee_dinkelbach} to zero as
\begin{align} 
    \label{eq:problem_se}
    \mathop{{\text{maximize}}}_{\bW}& \;\; \sum_{k = 1}^{K} R_k \\
    \label{eq:const_se}
    {\text{subject to}} & \;\;  {\rm tr}\left(\bbE\big[\bx_{\sf q}\bx_{\sf q}^{\sf H}\big]\right) \leq P.
\end{align}
Since the SINR in \eqref{eq:sinr_reform} increases with the transmit power, 
the SE is maximized when we meet
\begin{align}
    \label{eq:txpower_se}
    {\rm tr}\left({\pmb \Phi}_{\alpha_{\sf bs}}\bW\bW^{\sf H}\right) = 1,
\end{align}
i.e., the BS transmits signals use the maximum transmit power $P$.
Since this is equivalent to having $\tau = 1$, the results derived in the EE problem can be directly used by setting $\tau = 1$ and $\mu = 0$.
Now let us define a weighted precoding vector as
\begin{align}
    \label{eq:f_def}
    \bff_k = {\pmb \Phi}_{\alpha_{\sf bs}}^{{1}/{2}}\bw_k
\end{align} 
which is equivalent to $\bv_k$ in \eqref{eq:v_def} with $\tau = 1$.
Then we have the following optimality condition:
\begin{lemma}
	\label{lem:kkt}
	 The first-order optimality condition of the optimization problem \eqref{eq:problem_se} is satisfied if the following holds:
	\begin{align}
		\label{eq:kkt}
	    {\bf{C}}_{\sf KKT} (\bf{\bar f}) \bar {\bf{f}}  = \lambda(\bar {\bf{f}}) {\bf{D}}_{\sf KKT} (\bar {\bf{f}}) \bar {\bf{f}}, 
	\end{align}
	where
	\begin{align}
			\label{eq:Ckkt}
		&{\bf{C}}_{\sf KKT} (\bar {\bf{f}}) = \sum_{k = 1}^{K} \frac{{\bf{C}}_k }{ \bar {\bf{f}}^{\sf H} {\bf{C}}_k \bar {\bf{f}}}\prod_{\ell=1}^{K}\left(\bar{\bff}^{\sf H}\bC_\ell\bar{\bff}\right),\\
		\label{eq:Dkkt}
		&{\bf{D}}_{\sf KKT} (\bar {\bf{f}}) =  \sum_{k = 1}^{K} \frac{{\bf{D}}_k }{ \bar {\bf{f}}^{\sf H} {\bf{D}}_k \bar {\bf{f}} }\prod_{\ell=1}^{K}\left(\bar{\bff}^{\sf H}\bD_\ell\bar{\bff}\right),\\
     &{\bf{C}}_k = {\rm blkdiag} \big(\bG_k, \cdots, \bG_k\big) + \frac{ \sigma^2}{P} {\bf{I}}_{NK}, \\
     &{\bf{D}}_{k} = {\bf{C}}_{k} - {\rm blkdiag} \left({\bf{0}}_{N\times N}, \cdots,  \alpha_k({\pmb \Phi}_{\alpha_{\sf bs}}^{1/2})^{\sf H}{\bf{h}}_k {\bf{h}}_k^{\sf H}{\pmb \Phi}_{\alpha_{\sf bs}}^{1/2} , \cdots, {\bf 0}_{N\times N} \right),\\
     	& \lambda(\bar{\bff}) = \prod_{k=1}^{K}\left(\frac{\bar{\bff}^{\sf H}\bC_k\bar{\bff}}{\bar{\bff}^{\sf H}\bD_k\bar{\bff}}\right).
	\end{align}
	\begin{proof}
	    We derive the results by setting $\mu = 0$ and $\tau =1$ from Lemma~\ref{lem:kkt_ee}. 
	\end{proof}
\end{lemma}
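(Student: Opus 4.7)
The plan is to obtain Lemma~\ref{lem:kkt} as a direct specialization of Lemma~\ref{lem:kkt_ee}, together with the preliminary observation that the sum SE is maximized at full transmit power. First, I would argue that $\tau = 1$ is optimal for problem~\eqref{eq:problem_se}: inspecting $\bar{R}_k(\bV, \tau)$ in~\eqref{eq:ee_Rk}, the SINR inside the logarithm is of the form $\tau \Xi_k / (\tau \Psi_k + \sigma^2/P)$, which is strictly increasing in $\tau$ for $\tau \in (0, 1]$. Hence the SE objective is monotonically increasing in $\tau$, so the power constraint~\eqref{eq:const_se_simple} must bind at equality, which corresponds to $\tau = 1$ after the reparameterization in Section~IV-A. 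At $\tau = 1$, the definition~\eqref{eq:v_def} reduces to~\eqref{eq:f_def}, so $\bar{\bv}$ and $\bar{\bff}$ are identified.

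Next, with $\tau = 1$ fixed, setting $\mu = 0$ in the Dinkelbach reformulation~\eqref{eq:problem_ee_dinkelbach} removes the power-consumption penalty, recovering exactly~\eqref{eq:problem_se}. I would then track how $\mu = 0$ propagates through the objective~\eqref{eq:problem_ee_object_compact}: the antenna-selection factor $\prod_{n=1}^N (\bar{\bv}^{\sf H} \bE_n \bar{\bv})^{-\mu \omega_\rho P_{{\sf ant},n}}$ in $\lambda_v(\bar{\bv})$ collapses to $1$, and accordingly the $\bE_n$-terms in both ${\bf A}_{\sf KKT}$ and ${\bf B}_{\sf KKT}$ vanish (the inner sum weighted by $\mu \omega_\rho$ disappears, and the trailing product over $m$ collapses to unity).

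The final step is mechanical substitution: at $\tau = 1$, the matrices $\bA_k$ and $\bB_k$ defined just below~\eqref{eq:se_reform_ee} coincide termwise with $\bC_k$ and $\bD_k$ in Lemma~\ref{lem:kkt}, because $\sigma^2/(\tau P) \big|_{\tau=1} = \sigma^2/P$ and the remaining blockdiagonal structure is unchanged. Consequently $\lambda_v(\bar{\bv})$ reduces to $\lambda(\bar{\bff}) = \prod_k (\bar{\bff}^{\sf H} \bC_k \bar{\bff} / \bar{\bff}^{\sf H} \bD_k \bar{\bff})$, and the first-order condition~\eqref{eq:kkt_ee} becomes exactly~\eqref{eq:kkt} with $\bC_{\sf KKT}$ and $\bD_{\sf KKT}$ as stated.

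I do not expect a genuine obstacle here, since Lemma~\ref{lem:kkt_ee} already supplies the full stationarity machinery; the only conceptually substantive point is justifying the collapse $\tau^\star = 1$ for the SE problem, which follows from monotonicity in $\tau$ of each summand. The remainder is bookkeeping to confirm that the $\mu$-dependent terms disappear cleanly and that the constants align. For this reason the proof offered by the authors --- ``set $\mu = 0$ and $\tau = 1$ in Lemma~\ref{lem:kkt_ee}'' --- is essentially complete, and my plan is simply to verify these two specializations line-by-line against the definitions of $\bA_k$, $\bB_k$, $\bE_n$, and $\lambda_v$.
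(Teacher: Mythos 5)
Your proposal is correct and matches the paper's own argument: the authors likewise justify $\tau=1$ by monotonicity of the SINR in the transmit power (stated just before the lemma) and then obtain the condition by setting $\mu=0$ and $\tau=1$ in Lemma~\ref{lem:kkt_ee}, which collapses the $\bE_n$-terms and turns $\bA_k,\bB_k$ into $\bC_k,\bD_k$ exactly as you describe. Your line-by-line verification is simply a more explicit rendering of the same specialization.
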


We note that we can solve \eqref{eq:kkt} by leveraging Algorithm~\ref{alg:se} with $\mu = 0$ and $\tau = 1$; replacing \eqref{eq:Akkt}, \eqref{eq:Bkkt}, and $\bv_k$ with \eqref{eq:Ckkt}, \eqref{eq:Dkkt}, and $\bff_k$, respectively and computing $\bW$ based on \eqref{eq:f_def}. The algorithm provides the best precoder that maximizes the sum SE among all the stationary points. 
We call the quantization-aware GPI-based SE maximization algorithm as Q-GPI-SEM.

\section{Numerical Results}

This section evaluates the proposed algorithms to validate the performance and draw key system design insights. We also evaluate benchmark algorithms for comparison. The following cases are included in the simulations: $(1)$ the proposed algorithms, $(2)$ quantization-ignorant conventional GPI-based SE maximization (GPI-SEM) \cite{choi:twc:20}, $(3)$ quantization-aware linear precoders such as regularized zero-forcing (Q-RZF), zero-forcing (Q-ZF), and maximum ratio transmission (Q-MRT), and $(4)$ conventional linear precoders such as RZF, ZF, and MRT. 
The quantization-aware linear precoders are derived based on the AQNM system model.

\subsection{Simulation Environments}

We adopt a one-ring model \cite{adhi:tit:13} to generate the channel vector $\bh_k$. 
To generate pathloss $\rho_k$, we adopt the log-distance pathloss model in \cite{erceg1999empirically}; cell radius is $1$ km, the minimum distance between the BS and users is $100$ m, pathloss exponent is $4$, and $2.4$ GHz carrier frequency with $100$ MHz bandwidth, $8.7$ dB lognormal shadowing variance, and $5$ dB noise figure are considered.

For the BS power consumption, we set $P_{\sf LP} =14$ mW, $P_{\sf M} = 0.3$ mW, $P_{\sf LO} = 22.5$ mW, $P_{\sf H} = 3$ mW, and $\kappa = 0.27$ \cite{ribeiro:jstsp:18}.
For the parameters used in the proposed algorithms, we set $t_{\sf max} = 10$, $\epsilon = 10^{-8}$, $\epsilon_{\sf as} = 0.05$ $\epsilon_0 = 0.001$, and $\epsilon_{\sf gpi}$, $\epsilon_1$, $\epsilon_2$ to be $0.1$ unless mentioned otherwise, and set $\delta_{\sf GD} = 1$ for an initial step size which will be updated according to the backtracking line search algorithm \cite{armijo1966minimization}.

\subsection{Energy Efficiency Evaluation}

 \begin{figure}[!t]
	\centerline{\resizebox{0.6\columnwidth}{!}{\includegraphics{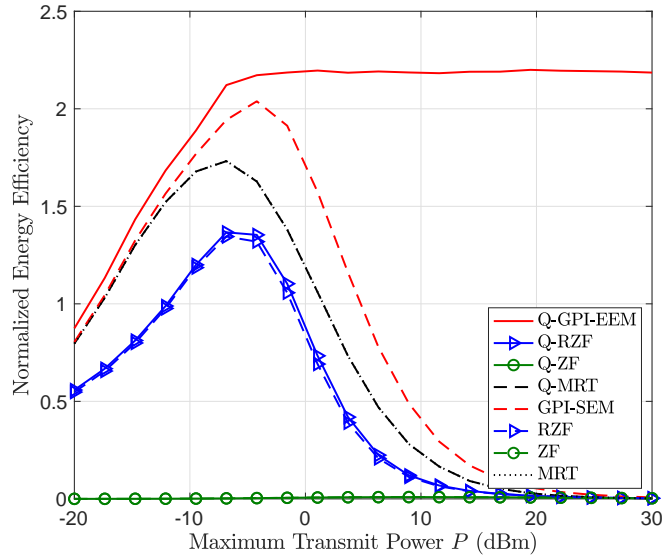}}}     
	\caption{The energy efficiency versus maximum transmit power $P$ results for $N= 32$ BS antennas, $K = 8$ users, $b_{{\sf DAC},n} \sim {\rm Unif}[2, 12]$, and $b_{{\sf ADC},k} \sim {\rm Unif}[2, 6]$.}
 	\label{fig:EePerPower}
\end{figure}

We evaluate the proposed algorithm in terms of the EE performance. 
Fig.~\ref{fig:EePerPower} shows the EE versus maximum transmit power $P$ results for $N= 32$ BS antennas, $K = 8$ users, $b_{{\sf DAC},n} \sim {\rm Unif}[2, 12]$, and $b_{{\sf ADC},k} \sim {\rm Unif}[2, 6]$.
The proposed Q-GPI-EEM algorithm achieves the highest EE performance while maintaining the highest EE for any maximum transmit power regime once it attains the highest EE. 
Accordingly, the proposed algorithm is shown to be robust to the maximum available transmit power by maintaining its highest EE and thus, provides high EE performance regardless of the maximum available transmit power.

 \begin{figure}[!t]
	\centerline{\resizebox{0.55\columnwidth}{!}{\includegraphics{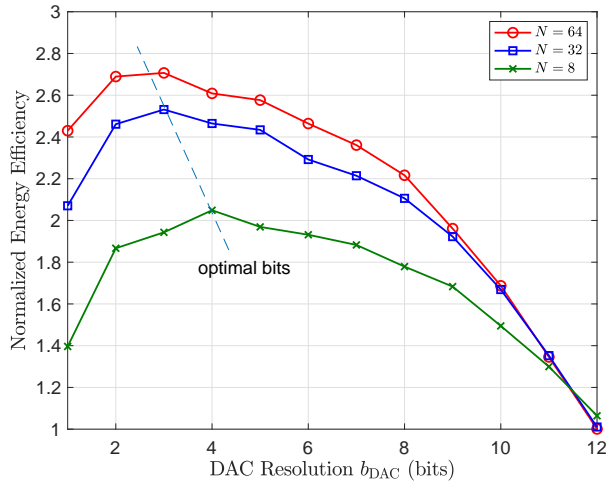}}}     
	\caption{The energy efficiency versus DAC resolutions $b_{\rm DAC}$ for $N\in \{8, 32, 64\}$ BS antennas, $K = 8$ users, $P = 30$ dBm maximum transmit power, and $b_{{\sf ADC},k} = 10$ ADC bits.}
 	\label{fig:EePerDAC}
\end{figure}

The EE with respect to the number of DAC bits of the proposed Q-GPI-EEM is evaluated for $N\in \{8, 32, 64\}$ BS antennas, $K = 8$ users, $P = 30$ dBm maximum transmit power, and $b_{{\sf ADC},k} = 10$ ADC bits in Fig.~\ref{fig:EePerDAC}. 
In this case, the resolution of all DACs is the same, and the selection threshold is relaxed to $\epsilon_{\sf as} = 0.2$ when $N=64$.
We note that the EEs of all the cases can be maximized when 3 to 4 bits are used per DAC. 
In addition, the optimal number of bits tends to become smaller, and the EE increases with the number of antennas.
In this regard, using more antennas with the proposed algorithm provides gains in both the SE and EE, and also it allows the BS to use coarser quantizers, thereby saving more power and simplifying each RF chain.

 \begin{figure}[!t]
	\centerline{\resizebox{0.6\columnwidth}{!}{\includegraphics{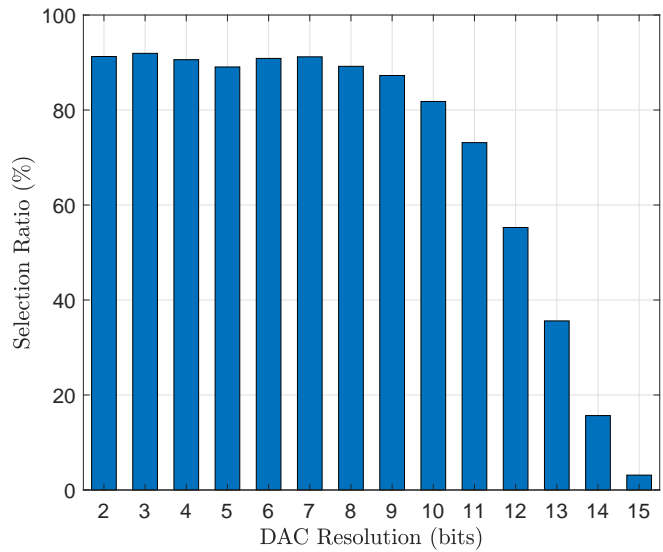}}}     
	\caption{Selection ratio of each DAC resolution for $N= 42$ BS antennas, $K = 8$ users, $b_{{\sf ADC},k} \sim {\rm Unif}[2, 6]$, and $P = 30$ dBm maximum transmit power. Each DAC resolution is assigned to three BS antennas.}
 	\label{fig:EePerDACHist}
\end{figure}

 \begin{figure}[!t]
	\centerline{\resizebox{0.6\columnwidth}{!}{\includegraphics{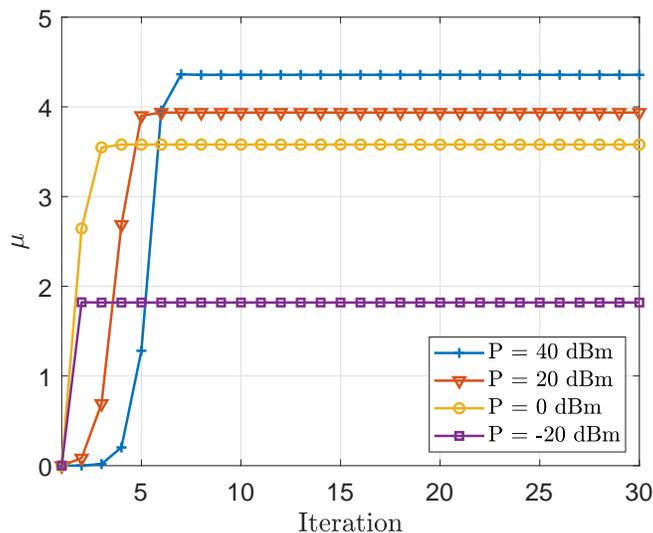}}}     
	\caption{Convergence of $\mu$ for $N= 32$ BS antennas, $K=8$ users, $b_{{\sf DAC},n} \sim {\rm Unif}[2, 12]$,  $b_{{\sf ADC},k} \sim {\rm Unif}[2, 6]$, and $P \in \{-20, 0, 20, 40\}$ dBm maximum transmit power.}
 	\label{fig:MuPerIteration}
\end{figure}

Now, we present the DAC selection ratios for a different number of bits. 
In the simulation, we set $N= 42$ BS antennas, $K = 8$ users, $b_{{\sf ADC},k} \sim {\rm Unif}[2, 6]$, and $P = 30$ dBm maximum transmit power. 
Each DAC resolution is assigned to three BS antennas.
Accordingly, the selection ratio indicates the ratio of selected antennas within each resolution on average.
For example, the selection ratio is about $36\%$ for $13$ bits, which means that only one antenna out of the three antennas with $13$ bits is selected at each transmission on average.
It is shown in Fig.~\ref{fig:EePerDACHist} that about $90\%$ of the antennas are selected for the low-to-medium resolution DACs. 
The selection ratio, however, rapidly decreases with the number of DAC bits in the high-resolution regime since such antennas with high-resolution DACs consume unnecessarily high power, which corresponds to our intuition. 
Consequently, less than one antenna out of the three antennas with $14$ bits is selected at each transmission on average, and no antenna with $15$ bits is selected in most cases.

To exam the convergence of the proposed Q-GPI-EEM algorithm, we provide the convergence of $\mu$ with respect to iteration for $N= 32$ BS antennas, $K=8$ users, $b_{{\sf DAC},n} \sim {\rm Unif}[2, 12]$,  $b_{{\sf ADC},k} \sim {\rm Unif}[2, 6]$, and $P \in \{-20, 0, 20, 40\}$ dBm maximum transmit power.
As we observe in Fig.~\ref{fig:MuPerIteration}, $\mu$ converges fast for all cases.
Although convergence takes longer with higher $P$, the result in Fig.~\ref{fig:MuPerIteration} still shows that $\mu$ converges within $8$ iterations, which can guarantee a fast convergence of Q-GPI-EEM in the practical maximum transmit power regime.

\subsection{Spectral Efficiency Evaluation}

 \begin{figure}[!t]
	\centerline{\resizebox{0.6\columnwidth}{!}{\includegraphics{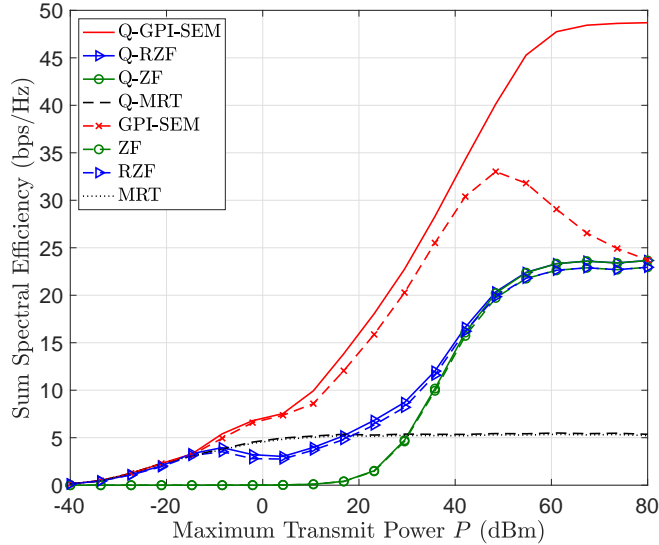}}}     
	\caption{The sum spectral efficiency versus maximum transmit power $P$ results for $N=32$ BS antennas, $K=8$ users, $b_{{\sf DAC},n} \sim {\rm Unif}[2, 12]$, and $b_{{\sf ADC},k} \sim {\rm Unif}[2, 6]$.}
 	\label{fig:SePerPower}
\end{figure}
Fig.~\ref{fig:SePerPower} shows the sum SE versus maximum transmit power $P$ results for $N=32$ BS antennas, $K=8$ users, $b_{{\sf DAC},n} \sim {\rm Unif}[2, 12]$, and $b_{{\sf ADC},k} \sim {\rm Unif}[2, 6]$. 
As shown in Fig.~\ref{fig:SePerPower}, Q-GPI-SEM achieves the highest sum SE over the most range of $P$. 
As the maximum transmit power increases, the SE is saturated because the quantization distortion also increases. 
The proposed algorithm can pull up this saturation level more than $2$ times that of the conventional methods.
We note that GPI-SEM provides a higher sum SE than the RZF, ZF, and MRT-based precoders. 
GPI-SEM, however, shows a huge gap from Q-GPI-SEM as $P$ increases, i.e., quantization noise also increases.
Moreover, its SE even decreases in the very high transmit power regime because the interference from the quantization error, which cannot be fully treated with GPI-SEM dominates the SE performance in the regime.
In the Q-RZF/RZF, Q-ZF/ZF, and Q-MRT/MRT cases, the sum SE performance is not comparable with that of Q-GPI-SEM except in the very low transmit power regime where the quantization error is buried in the thermal noise.
Therefore, Fig.~\ref{fig:SePerPower} validates the sum SE performance of the proposed method over the practical transmit power regime.




\begin{figure}[t]
\centering
$\begin{array}{c c }
{\resizebox{0.5\columnwidth}{!}
{\includegraphics{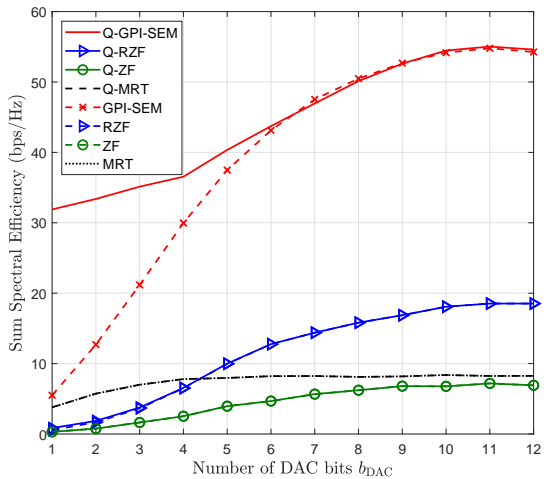}}
} &
{\resizebox{0.52    \columnwidth}{!}
{\includegraphics{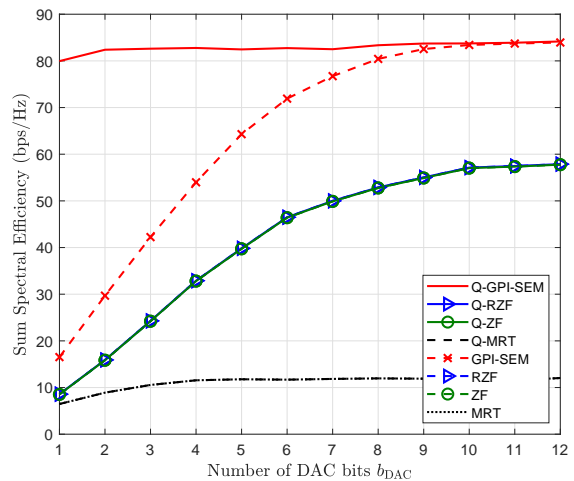}}
}\\ \mbox{\small (a) $N=8$ BS antennas} & \mbox{\small (b) $N=32$ BS antennas}
\end{array}$
\caption{
The sum spectral efficiency versus the number of DAC bits $b_{\rm DAC}$ for $K=8$ users, $b_{{\sf ADC},k} =10$ ADC bits for all $k$, $P=50$ dBm maximum transmit power, and $N\in \{8, 32\}$ BS antennas.}
\label{fig:SePerDAC}
\end{figure}

 \begin{figure}[!t]
	\centerline{\resizebox{0.6\columnwidth}{!}{\includegraphics{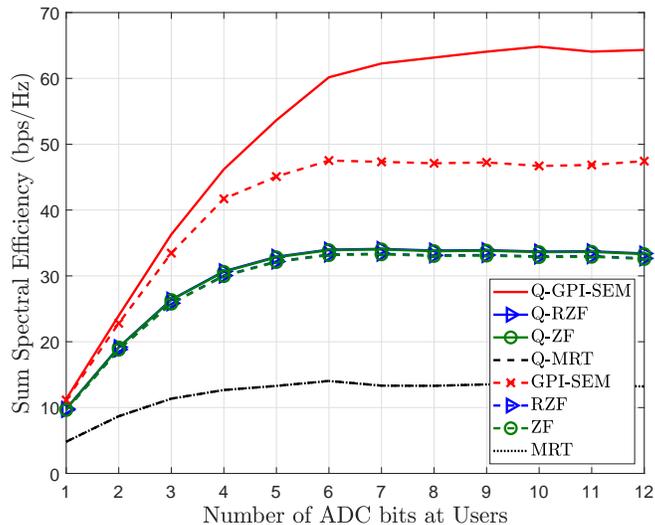}}}     
	\caption{The sum spectral efficiency versus the number of ADC bits $b_{\rm ADC}$ for $N=32$ BS antennas, $K=8$ users, $b_{{\sf DAC},n} \sim {\rm Unif}[2, 12]$, and $P=30$ dBm maximum transmit power.}
 	\label{fig:SePerADC}
\end{figure}

Fig.~\ref{fig:SePerDAC} shows the sum SE versus the number of DAC bits $b_{\rm DAC}$ for $K=8$ users, $b_{{\sf ADC},k} =10$ ADC bits for all $k$, $P=50$ dBm maximum transmit power, and $N\in \{8, 32\}$ BS antennas. 
In this case, the entire DACs have the same resolution. 
We first note that Q-GPI-SEM provides the highest sum SE, and the sum SE of GPI-SEM converges to that of Q-GPI-SEM as the number of DAC bits increases since both the DACs and ADCs have high resolutions. 
More importantly, the performance variation of Q-GPI-SEM over  DAC resolutions becomes marginal in $N=32$ case compared to $N=8$ case, whereas the other algorithms still show high-performance variations.  
Therefore, in the massive MIMO, the proposed algorithm is highly robust to quantization noise at the BS by achieving the SE of high-resolution DACs with low-resolution DACs ($2\sim 5$ bits).

The sum SE with respect to the number of ADC bits $b_{\rm ADC}$ is also simulated for $N=32$ BS antennas, $K=8$ users, $b_{{\sf DAC},n} \sim {\rm Unif}[2, 12]$, and $P=30$ dBm maximum transmit power in Fig.~\ref{fig:SePerADC}. 
In this case, the entire ADCs have the same resolution. 
As expected, the sum SEs of all algorithms increase with $b_{\rm ADC}$ and saturate when the quantization error from ADCs becomes negligible. 
Q-GPI-SEM, however, still achieves the highest sum SE in the high ADC resolution regime, showing a large gap from the other methods since the quantization error from the low-resolution DACs still remains.

Overall, the proposed Q-GPI-SEM algorithm outperforms the conventional precoding algorithms, and it is indeed an efficient method in the massive MIMO communication systems with low-resolution DACs or ADCs, showing robustness to quantization error. 

\begin{figure}[t]
\centering
$\begin{array}{c c }
{\resizebox{0.5\columnwidth}{!}
{\includegraphics{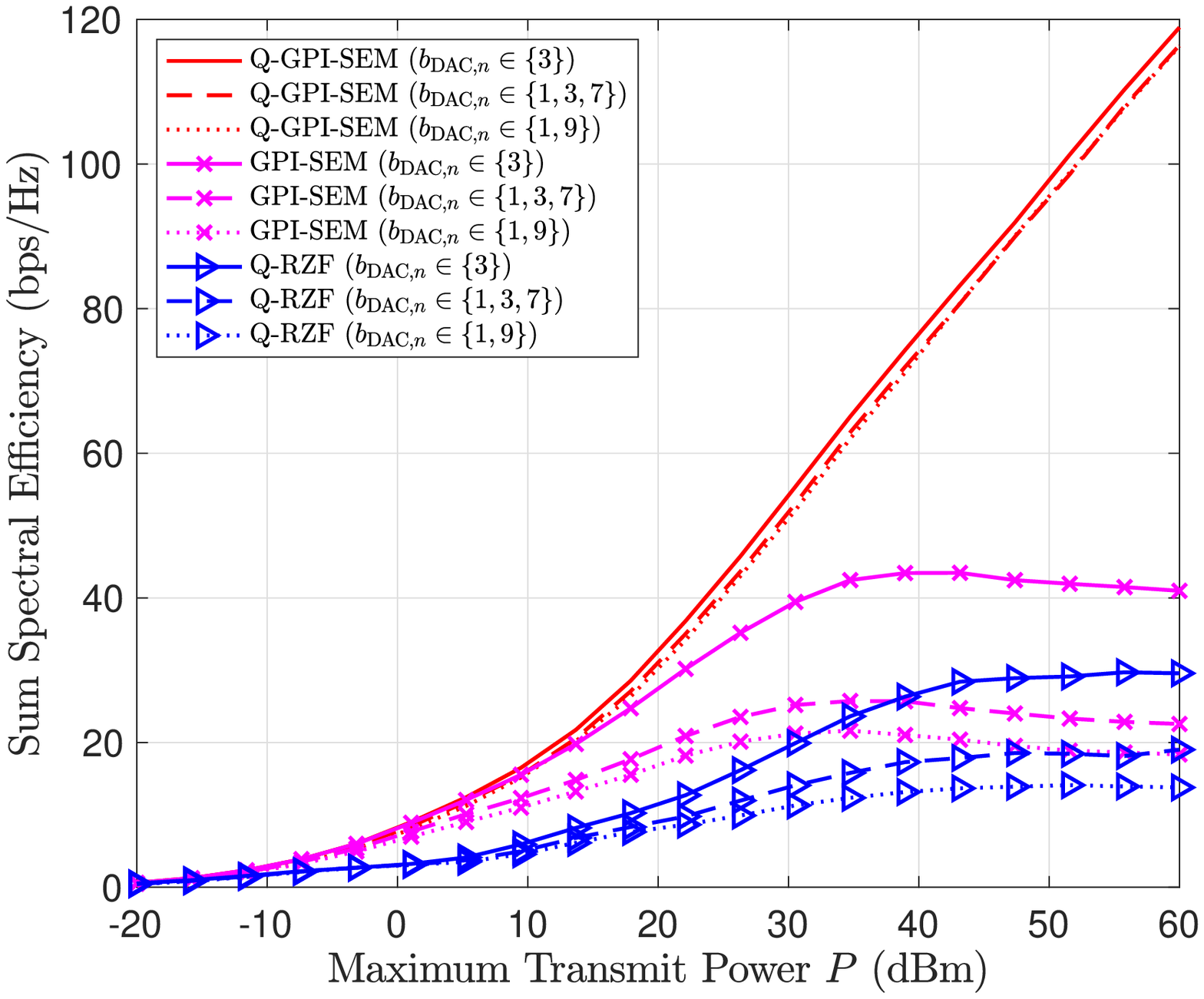}}
} &
{\resizebox{0.5   \columnwidth}{!}
{\includegraphics{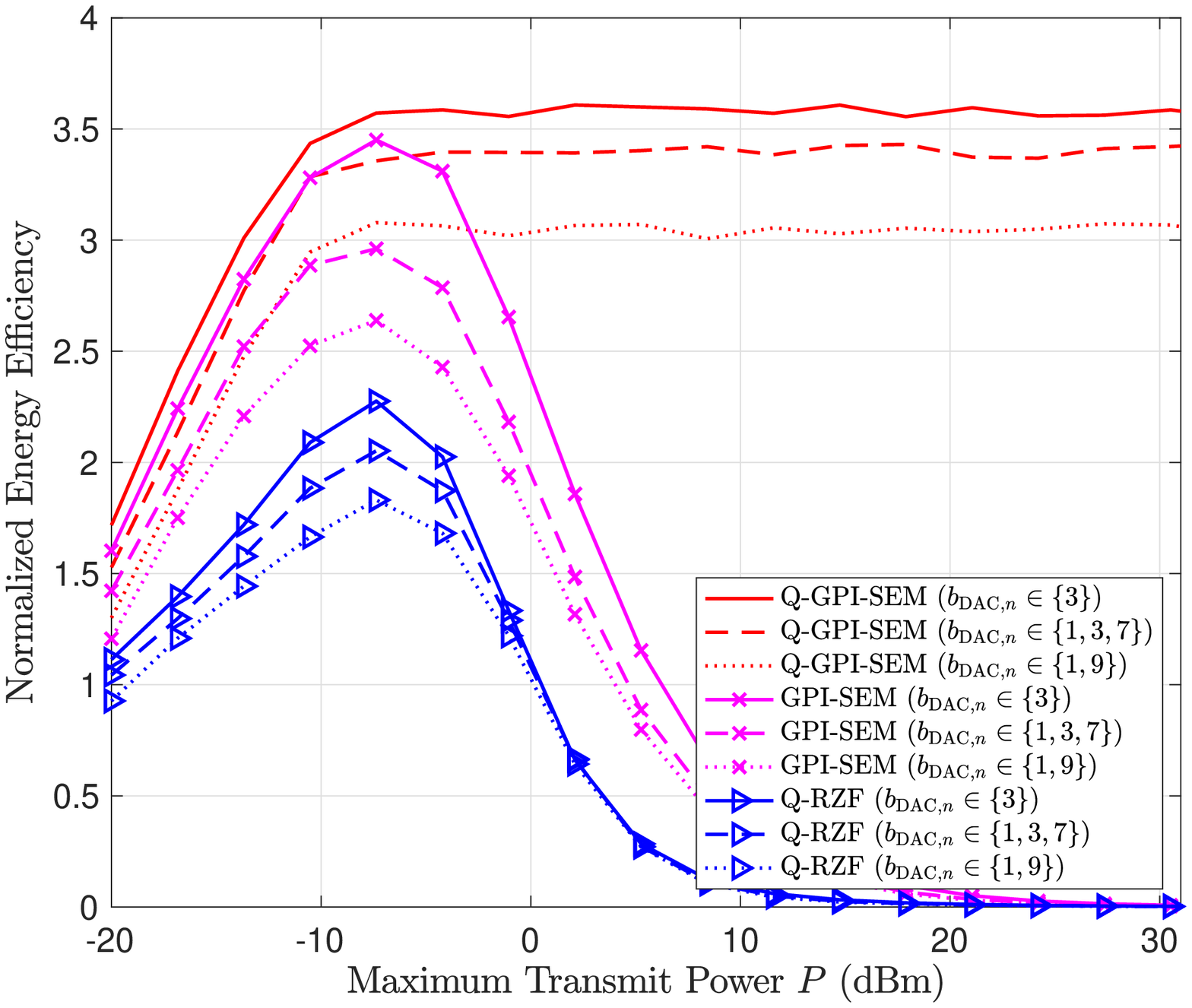}}
}\\ \mbox{\small (a) Spectral Efficiency} & \mbox{\small (b) Energy Efficiency}
\end{array}$
\caption{
The spectral efficiency and energy efficiency versus maximum transmit power $P$ for $K=8$ users, $b_{{\sf ADC},k} =10$ ADC bits for all $k$, and $N = 32$ BS antennas with different DAC configurations.}
\label{fig:EeDACconfig}
\end{figure}

Finally, to provide system design insights, we compare various DAC configurations such as $(i)$ $b_{{\sf DAC},n} = 3$, $(ii)$ $b_{{\sf DAC},n} \in \{1, 3, 7\}$, and $(iii)$ $b_{{\sf DAC},n} \in \{1,9\}$ under the constraint of the total number of DAC resolution bits ($96$ bits total) for $K=8$, $b_{{\sf ADC},k} =10$ for all $k$, and $N = 32$.
Fig.~\ref{fig:EeDACconfig} reveals that the homogeneous DAC configuration where all DACs have the same resolution achieves the highest sum SE and EE. 
In particular, the homogeneous DAC configuration provides a noticeable improvement in the EE since medium- and high-resolution DACs in the other configurations cause inefficiency in power consumption.
In addition, the proposed algorithm shows a relatively small variation in the sum SE across different DAC configurations, whereas the other algorithms reveal a noticeable performance gap across configurations. 
Therefore, the result demonstrates that the proposed algorithm is more robust to DAC configurations, providing more system design flexibility.

\section{Conclusion}

In this paper, we solved a precoding problem for EE maximization in downlink multiuser massive MIMO systems with low-resolution DACs and ADCs. 
To take into account the effects of RF circuit power consumption, 
we incorporated an antenna selection feature into the EE maximization problem.
Managing the quantization errors, we 
reformulated the SINR and adopted the Dinkelbach method. 
Subsequently, we decomposed the problem into precoding direction and power scaling problems and proposed the joint precoding and antenna selection algorithm.
As a special case, we showed that the proposed algorithm can reduce to the SE maximization algorithm by leveraging the product of Rayleigh quotients form of the reformulated SINR.
The simulation results demonstrated that the proposed algorithms improve both EE and SE compared to conventional methods. In particular, the EE maximization algorithm presented robustness to the maximum available transmit power constraint with fast convergence, while other methods suffer from EE degradation as the maximum transmit power increases. 
In addition, it was shown that the proposed methods achieve high enough EE and SE even with low-resolution DACs in the massive MIMO regime, which means that the performance degradation caused by low-resolution quantizers can be compensated by using our method with large-scale arrays. 
As a result, the proposed algorithms can provide considerable benefits in the future massive MIMO systems by offering high flexibility on quantizer configuration and improving the SE and EE performance.

\bibliographystyle{IEEEtran}
\bibliography{ref_SE_EE_QMIMO}

\end{document}